\newcommand\anglebra[1]{\left< \,#1 \, \right>}
\newcommand\rmlostr{\hbox{\rm RML}_{\hbox{\rm \scriptsize O-Str}}}
\newcommand\rmlpstr{\hbox{\rm RML}^{\hbox{\rm \scriptsize P-Str}}_{\seq{2}{1}}}
\newcommand\rml{{\hbox{\rm RML}}}
\newcommand\forml{\hbox{\rm RML}_{\seq{2}{1}}}
\newcommand\rforml{\hbox{\rm RML}_{\seq{2}{1}}^\mathrm{res}}
\newcommand\frml{\rml_{\rm f}}
\newcommand\rmlterm[1]{{\bf #1}}
\newcommand\rmltype[1]{{\sf #1}}
\newcommand\intt{\rmltype{int}}
\newcommand\unitt{\rmltype{unit}}
\newcommand\intreft{\rmltype{int\ ref}}
\newcommand\suc[1]{\rmlterm{succ}(#1)}
\newcommand\pre[1]{\rmlterm{pred}(#1)}
\newcommand\letin[2]{\rmlterm{let}\,#1\,\rmlterm{in}\,#2}
\newcommand\comp[1]{\rmlterm{comp}(#1)}
\newcommand\makevar[2]{\rmlterm{mkvar}(#1,#2)}
\newcommand\refint[1]{\rmlterm{ref}\,#1}
\newcommand\newc{\refint{0}}
\newcommand{\aasg}{\,\raisebox{0.065ex}{:}{=}\,}
\newcommand\assg[2]{#1\aasg#2}
\newcommand\deref[1]{!#1}
\newcommand\while[2]{\rmlterm{while}\,#1\,\rmlterm{do}\,#2}
\newcommand\cond[3]{\rmlterm{if}\,#1\,\rmlterm{then}\,#2\,\rmlterm{else}\,#3}
\newcommand\seq[2]{{#1} \vdash {#2}}
\newcommand\sem[1]{{\llbracket}{#1}{\rrbracket}}
\newcommand\sseq[2]{\sem{\seq{#1}{#2}}}
\newcommand\can{\mathbb{C}}
\newcommand\ptarget[1]{\stackrel{\bullet}{#1}}
\newcommand\psource[1]{\stackrel{\circ}{#1}}
\newcommand\twovec[2]{\Bigl(\negthinspace\begin{smallmatrix}#1\\#2\end{smallmatrix}\Bigr)}
\newcommand\threevec[3]{\Bigl(\negthinspace\begin{smallmatrix}#1\\#2\\#3\end{smallmatrix}\Bigr)}
\newcommand\pred{\mathit{pred}}
\newcommand\set[1]{\{#1\}}
\newcommand\powerset{{\cal P}}
\newcommand\makeset[1]{\set{#1}}
\newcommand\calA{{\cal A}}
\newcommand\calC{{\cal C}}
\newcommand\calD{{\cal D}}
\newcommand\dataset{\calD}
\def\D{\mathbb{D}}
\newcommand\intnum{\mathbb{Z}}
\newcommand\natnum{\mathbb{N}}
\tikzstyle{play} = [start chain, 
\tikzstyle{tri} = [draw,isosceles triangle, anchor=apex]
\tikzstyle{clear} = [draw=none,fill=none]
\tikzstyle{arena} = [node distance=1em,every node/.style={clear}]
\tikzset{initial text={}}
\begin{document}

\mainmatter

\title{Fragments of ML Decidable by\\
Nested Data Class Memory Automata}

\titlerunning{Fragments of ML Decidable by Nested Data Class Memory Automata}

\author{
Conrad Cotton-Barratt\inst{1}\fnmsep\thanks{Supported by an EPSRC Doctoral Training Grant}
\and David Hopkins\inst{1}\fnmsep\thanks{Supported by Microsoft Research and Tony Hoare. Now at Ensoft Limited, UK.}
\and Andrzej S. Murawski\inst{2}\fnmsep\thanks{Supported by EPSRC (EP/J019577/1)} 
\and C.-H. Luke Ong\inst{1},\fnmsep\thanks{Partially supported by Merton College Research Fund}
}

\institute{Department of Computer Science, University of Oxford, UK 
\and Department of Computer Science, University of Warwick, UK
}

\tocauthor{Conrad Cotton-Barratt (University of Oxford),
David Hopkins (University of Oxford),
Andrzej Murawski (University of Warwick),
Luke Ong (University of Oxford)}

\maketitle

\thispagestyle{fancy}

\begin{abstract}
The call-by-value language $\rml$ may be viewed as a canonical restriction of Standard ML to ground-type references, augmented by a ``bad variable'' construct in the sense of Reynolds.
We consider the fragment of (finitary) $\rml$ terms of order at most 1 with free variables of order at most 2, and identify two subfragments of this for which we show observational equivalence to be decidable.  The first subfragment, $\rmlpstr$, consists of those terms in which the P-pointers in the game semantic representation are determined by the underlying sequence of moves.  The second subfragment consists of terms in which the O-pointers of moves corresponding to free variables in the game semantic representation are determined by the underlying moves.
These results are shown using a reduction to a form of automata over data words in which the data values have a tree-structure, reflecting the tree-structure of the threads in the game semantic plays.
In addition we show that observational equivalence is undecidable at {every} third- or higher-order type, {every} second-order type which takes at least two first-order arguments, and every second-order type (of arity greater than one) that has a first-order argument which is not the final argument.
\end{abstract}

\section{Introduction}

{$\rml$} is a call-by-value functional language with state \cite{AM97b}.  It is similar to Reduced ML \cite{PS98}, the canonical restriction of Standard ML to ground-type references, except that it includes a ``bad variable'' constructor (in the absence of the constructor, the equality test is definable). This paper concerns the decidability of observational equivalence of finitary RML, $\frml$. 
Our ultimate goal is to classify the decidable fragments of $\frml$ completely. 
In the case of finitary Idealized Algol (IA), the decidability of observational equivalence depends only on the type-theoretic order \cite{MOW05} of the type sequents. In contrast, the decidability of $\frml$ sequents is not so neatly characterised by order (see Figure~\ref{tab:SummaryOfDecidabilityResultsForRML}): there are undecidable sequents of order as low as 2 \cite{Mur05}, amidst interesting classes of decidable sequents at each of orders 1 to 4. 

Following Ghica and McCusker \cite{GhicaM03}, we use game semantics to decide observational equivalence of $\frml$. Take a sequent $\seq{\Gamma}{M : \theta}$ with $\Gamma = x_{1} :\theta_{1}, \cdots, x_{n}:\theta_{n}$. In game semantics \cite{HY99}\cite{HylandO00}, the type sequent is interpreted as a P-strategy $\sem{\seq{\Gamma}{M : \theta}}$ for playing (against O, who takes the environment's perspective) in the prearena $\sem{\seq{\overline \theta}{\theta}}$. A play between P and O is a sequence of moves in which each non-initial move has a justification pointer to some earlier move -- its justifier. Thanks to the fully abstract game semantics of $\rml$, observational equivalence is characterised by \emph{complete plays} i.e.~$\seq{\Gamma}{M \cong N}$ iff the P-strategies, $\sem{\seq{\Gamma}{M}}$ and $\sem{\seq{\Gamma}{N}},$ contain the same set of {complete plays}. 
Strategies may be viewed as highly constrained processes, and are amenable to automata-theoretic representations; the chief technical challenge lies in the encoding of pointers. 

In \cite{HMO11} we introduced the \emph{O-strict} fragment of $\frml$, $\rmlostr$, consisting of sequents $\seq{x_{1}:\theta_{1}, \cdots, x_{n}:\theta_{n}}{M : \theta}$ such that $\theta$ is \emph{short} (i.e.~order at most 2 and arity at most 1), and every argument type of every $\theta_{i}$ is short.
Plays over prearenas denoted by O-strict sequents enjoy the property that the pointers from O-moves are uniquely determined by the underlying move sequence. The main result in \cite{HMO11} is that the set of complete plays of a $\rmlostr$-sequent is representable as a visibly pushdown automaton (VPA). A key idea is that it suffices to require each word of the representing VPA to encode the pointer from only \emph{one} P-question. The point is that, when the full word language is analysed, it will be possible to uniquely place all justification pointers.

The simplest type that is not O-strict is $\beta \rightarrow \beta \rightarrow \beta$ where $\beta \in \makeset{\intt,\unitt}$. Encoding the pointers from O-moves is much harder 
because O-moves are controlled by the environment rather than the term. As observational equivalence is defined by a quantification over all contexts, the strategy for a term must consider \emph{all} legal locations of pointer from an O-move, rather than just a single location in the case of pointer from a P-move. In this paper, we show that automata over data words can precisely capture strategies over a class of non-O-strict types. 

\paragraph{Contributions.} 
We identify two fragments of $\frml$ in which we can use deterministic weak nested data class memory automata \cite{Cotton-BarrattMO14} 
(equivalent to the locally prefix-closed nested data automata in \cite{Decker14}) to represent the set of complete plays of terms in these fragments.  
These automata operate over a data set which has a tree structure, and we use this structured data to encode O-pointers in words.

Both fragments are contained with the fragment $\forml$, which consists of terms-in-context $\seq{\Gamma}{M}$ where every type in $\Gamma$ is order at most 2, and the type of $M$ is order at most 1.  The first fragment, the \emph{P-Strict subfragment}, consists of those terms in $\forml$ for which in the game semantic arenas have the property that the P-pointers in plays are uniquely determined by the underlying sequence of moves.  
This consists of terms-in-context $\seq{\Gamma}{M: \theta}$ in which $\theta$ is any first order type, and each type in $\Gamma$ has arity at most $1$ and order at most 2. 
The second fragment, $\rforml$, consists of terms-in-context $\seq{\Gamma}{M: \theta}$ in which $\theta$, again, is any first order type, and each type $\theta' \in \Gamma$ is at most order $2$, such that each argument for $\theta'$ has arity at most 1.
Although these two fragments are very similar, they use different encodings of data values, and we discuss the difficulties in extending these techniques to larger fragments of $\frml$.

Finally we show that observational equivalence is undecidable at {every} third- or higher-order type, {every} second-order type which takes at least two first-order arguments, and every second-order type (of arity greater than one) that has a first-order argument which is not the final argument. See Figure~\ref{tab:SummaryOfDecidabilityResultsForRML} for a summary.

\begin{figure}[h]
	\centering
	\begin{tabular}{|c|c|c|c|}
	\hline
	Fragment & Representative Type Sequent & Recursion
	& Ref.\\
	\hline
	\hline
	\multicolumn{4}{|c|}{\textbf{Decidable}}\\
	\hline
	
			\begin{tabular}{c}O-Strict / $\rmlostr$ \\(\textsc{ExpTime}-Complete)\end{tabular} &
	 		\begin{tabular}{l}$\seq{((\beta \rightarrow \ldots \rightarrow \beta)\rightarrow \beta)\rightarrow \ldots  \rightarrow \beta}{}$\\$\qquad\qquad{(\beta\rightarrow \ldots \rightarrow \beta) \rightarrow \beta}$ \end{tabular} 
	 		& $\rmlterm{while}$ & \cite{Hopkins12,HMO11}\\
	
	\hline
	
			\begin{tabular}{c}O-Strict + Recursion \\(\textsc{DPDA}-Hard)\end{tabular} &
	 		\begin{tabular}{l}$\seq{((\beta \rightarrow \ldots \rightarrow \beta)\rightarrow \beta)\rightarrow \ldots  \rightarrow \beta}{}$\\$\qquad\qquad{(\beta\rightarrow \ldots \rightarrow \beta) \rightarrow \beta}$\end{tabular} 
	 		& $\beta \rightarrow \beta$ & \cite{Hopkins12} \\

	\hline
	
		$\rmlpstr$ &
		\begin{tabular}{c}$\seq{(\beta \rightarrow \dots \rightarrow \beta)\rightarrow \beta}{\beta\rightarrow  \dots \rightarrow \beta}$\end{tabular}&
		$\rmlterm{while}$ & \dag \\
	\hline
	
		$\rforml$ &
		\begin{tabular}{c}$\seq{(\beta \rightarrow \beta) \rightarrow \dots \rightarrow (\beta \rightarrow \beta) \rightarrow \beta}{}$\\${\beta \rightarrow \dots \rightarrow \beta}$\end{tabular}&
		$\rmlterm{while}$ & \dag \\
		\hline
	
	\hline
	\hline

	\multicolumn{4}{|c|}{\textbf{Undecidable}}\\
	
	\hline
	
			Third-Order &
			\begin{tabular}{c}$\seq{}{((\beta \rightarrow \beta) \rightarrow \beta)\rightarrow \beta}$\\$\seq{(((\beta \rightarrow \beta) \rightarrow \beta)\rightarrow \beta)\rightarrow \beta}{\beta}$\end{tabular} 
			& $\bot$ & \cite{Hopkins12},\dag \\
	
	\hline
	
			Second-Order &
			\begin{tabular}{c}$\seq{}{(\beta \rightarrow \beta) \rightarrow \beta\rightarrow \beta}$\\$\seq{((\beta \rightarrow \beta) \rightarrow \beta\rightarrow \beta) \rightarrow \beta}{\beta}$\end{tabular}
			& $\bot$ & \cite{Hopkins12},\dag \\
	
	\hline
	
			Recursion 
			& Any
			& $(\beta \rightarrow \beta) \rightarrow \beta$ & \cite{Hopkins12},\dag \\
	
	\hline
	\hline

	\multicolumn{4}{|c|}{\textbf{Unknown}}\\
	
	\hline
		$\forml$ & 
		\begin{tabular}{c} $(\beta \rightarrow \dots \rightarrow \beta) \rightarrow \dots \rightarrow (\beta \rightarrow \dots \rightarrow \beta)$\\$\seq{\rightarrow \beta}{\beta \rightarrow \dots \rightarrow \beta}$  \end{tabular} 
		& $\bot$ & - \\
		\hline
		$\hbox{\rm RML}_X$ & 
		\begin{tabular}{c} $\seq{}{\beta \rightarrow (\beta \rightarrow \beta) \rightarrow \beta}$\\$\seq{((\beta \rightarrow \beta)\rightarrow \beta)\rightarrow \beta}{\beta \rightarrow \beta \rightarrow \beta}$  \end{tabular} 
		& $\bot$ & - \\
		\hline
		$\hbox{FO RML}$ + Recursion & 
		$\seq{}{\beta\rightarrow  \dots \rightarrow \beta}$
		& ${\beta\rightarrow  \beta \rightarrow \beta}$ & - \\
		\hline
	\end{tabular}
	\caption{\label{tab:SummaryOfDecidabilityResultsForRML} Summary of RML Decidability Results. \small{($\dag$ marks new results presented here; $\beta \in \makeset{\intt,\unitt}$; we write $\bot$ to mean an undecidability result holds (or none is known) even if no recursion or loops are present, and the only source of non-termination is through the constant $\Omega$)}}
\vspace{-20pt}
\end{figure}

\paragraph*{Related Work.}
A related language with full ground references (i.e. with a ${\rmltype{int\ ref\ ref}}$ type) was studied in \cite{MurawskiT12}, and observational equivalence was shown to be undecidable even at types $\seq{}{\unitt \rightarrow \unitt \rightarrow \unitt}$. 
In contrast, for $\frml$ terms, we show decidability at the same type. The key technical innovation of our work is the use of automata over infinite alphabets to encode justification pointers. 
Automata over infinite alphabets have already featured in papers on game semantics~\cite{MT11,MurawskiT12} but there they were used for a different purpose, namely, to model fresh-name generation. 
The nested data class memory automata we use in this paper are an alternative presentation of locally prefix-closed data automata \cite{Decker14}.

\section{Preliminaries}
\subsubsection*{\hbox{RML}}
We assume base types $\unitt$, for commands, $\intt$ for a finite set of integers, and a integer variable type, $\intreft$.  Types are built from these in the usual way.
The \emph{order} of a type $\theta \rightarrow \theta'$ is given by $max( order(\theta) + 1, order(\theta'))$, where base types $\unitt$ and $\intt$ have order $0$, and $\intreft$ has order 1.  The \emph{arity} of a type $\theta \rightarrow \theta'$ is $arity(\theta') + 1$ where $\unitt$ and $\intt$ have arity $0$, and $\intreft$ has arity 1.
A full syntax and set of typing rules for $\rml$ is given in Figure \ref{fig:rml-syntax}.
Note though we include only the arithmetic operations $\suc{i}$ and $\pre{i}$, these are sufficient to define all the usual comparisons and operations.  We will write $\letin{x=M}{N}$ as syntactic sugar for $(\lambda x. N) M$, and $M;N$ for $(\lambda x. N) M$ where $x$ is a fresh variable.

\begin{figure}
\vspace{-20pt}
\begin{mathpar}
\inferrule{ }{ \seq{\Gamma}{\mathsf{()}:\unitt}}

\inferrule{ i\in\natnum}{ \seq{\Gamma}{i:\intt}}

\inferrule{\seq{\Gamma}{M : \intt}}{\seq{\Gamma}{\suc{M} : \intt}}

\inferrule{\seq{\Gamma}{M : \intt}}{\seq{\Gamma}{\pre{M} : \intt}}

\inferrule{ \seq{\Gamma}{M:\intt} \\
          \seq{\Gamma}{M_0:\theta} \\ 
          \seq{\Gamma}{M_1:\theta}}{ 
          \seq{\Gamma}{\cond{M}{M_1}{M_0}:\theta}}
          
\inferrule{\seq{\Gamma}{M:\intreft}}{\seq{\Gamma}{!M:\intt}}

\inferrule{ \seq{\Gamma}{M:\intreft} \\ \seq{\Gamma}{N:\intt}}{ \seq{\Gamma}{\assg{M}{N}:\unitt}}

\inferrule{\seq{\Gamma}{M : \intt}}{ \seq{\Gamma}{{\refint{M}} : \intreft}}

\inferrule{ }{ \seq{\Gamma,x:\theta}{x:\theta}}

\inferrule{\seq{\Gamma}{M:\theta\rightarrow\theta'} \\ \seq{\Gamma}{N:\theta}}{ \seq{\Gamma}{MN:\theta'}}

\inferrule{\seq{\Gamma,x:\theta}{M:\theta'}}{ \seq{\Gamma}{\lambda x^\theta.M:\theta\rightarrow\theta'}}

\inferrule{ \seq{\Gamma}{M:\intt} \\ \seq{\Gamma}{N:\unitt}}{ \seq{\Gamma}{\while{M}{N}:\unitt}}

\inferrule{ \seq{\Gamma}{M:\unitt\rightarrow\intt}\\ \seq{\Gamma}{N:\intt\rightarrow\unitt}}{\seq{\Gamma}{\makevar{M}{N}:\intreft} }

\end{mathpar}
\vspace{-10pt}
\caption{Syntax of RML}\label{fig:rml-syntax}
\vspace{-10pt}
\end{figure}

The operational semantics, defined in terms of a big-step relation, are standard \cite{Mur05}.
For closed terms $\vdash M$ we write $M{\Downarrow}$ just if there exist $s, V$ such that $\emptyset, M \Downarrow s,V$.  Two terms $\seq{\Gamma}{M : \theta}$ and $\seq{\Gamma}{N:\theta}$ are \emph{observationally equivalent} (or \emph{contextually equivalent}) if for all (closing) contexts $C[-]$ such that $\seq{\emptyset}{C[M], C[N] : \unitt}$, $C[M]{\Downarrow}$ if and only if $C[N]{\Downarrow}$. 

It can be shown that every $\rml$ term is effectively convertible to an equivalent term in \emph{canonical form} \cite[Prop.~3.3]{Hopkins12}, defined by the following grammar ($\beta \in \makeset{\unitt,\intt}$).
\[\small
\begin{array}{ll}
\can & ::= \, ()  \, | \, i \, | \, x^\beta \, | \, \suc{x^\beta} \, | \, \pre{x^\beta} \,|\, \cond{x^\beta}{\can}{\can} \,|\, x^\intreft\aasg y^\intt \,|\, !x^\intreft \,|\\ 
&  \lambda x^\theta.\can \,| \, \makevar{\lambda x^\unitt.\can}{\lambda y^\intt.\can} \,|\,  \letin{x=\newc}{\can} \,|\, \while{\can}{\can}  \,|\, \letin{x^\beta=\can}{\can} \,| \\
&  \letin{x= z y^\beta}{\can} \,|\,\
\letin{x= z\, \makevar{\lambda u^\unitt.\can}{\lambda v^\intt.\can}}{\can} \,|\, \letin{x=z(\lambda x^\theta.\can)}{\can}
\end{array}\]

\subsubsection*{Game Semantics}
We use a presentation of call-by-value game semantics in the style of Honda and Yoshida \cite{HY99}, as opposed to Abramsky and McCusker's isomorphic model \cite{AM97b}, as Honda and Yoshida's more concrete constructions lend themselves more easily to recognition by automata.  We recall the following presentation of the game semantics for $\rml$ from \cite{HMO11}.

An \emph{arena} $A$ is a triple $(M_{A}, \vdash_{A}, \lambda_{A})$ where $M_A$ is a set of \emph{moves} where $I_A \subseteq M_A$ consists of \emph{initial} moves, $\vdash_A \subseteq M_A \times (M_A \backslash I_A)$ is called the \emph{justification relation}, and $\lambda_A : M_A \rightarrow \makeset{O,P}\times \makeset{Q,A}$ a labelling function such that for all $i_A \in I_A$ we have $\lambda_A(i_A) = (P,A)$ and if $m \vdash_A m'$ then $(\pi_1 \lambda_A)(m) \neq (\pi_1 \lambda_A)(m')$ and $(\pi_2 \lambda_A)( m') = A \Rightarrow (\pi_2 \lambda_A)(m) = Q$.

The function $\lambda_A$ labels moves as belonging to either \emph{Opponent} or \emph{Proponent} and as being either a \emph{Question} or an \emph{Answer}.  Note that answers are always justified by questions, but questions can be justified by either a question or an answer.  We will use arenas to model types.  However, the actual games will be played over \emph{prearenas}, which are defined in the same way except that initial moves are O-questions.  

Three basic arenas are $0$, the empty arena, $1$, the arena containing a single initial move $\bullet$, and $\intnum$, which has the integers as its set of moves, all of which are initial P-answers.    
The constructions on arenas are defined in Figure~\ref{fig:arenaConstructions}.  
Here we use $\overline{I_A}$ as an abbreviation for $M_A \backslash I_A$, and $\overline{\lambda_A}$ for the O/P-complement of $\lambda_A$.   Intuitively $A\otimes B$ is the union of the arenas $A$ and $B$, but with the initial moves combined pairwise.  $A \Rightarrow B$ is slightly more complex.  First we add a new initial move, $\bullet$.  We take the O/P-complement of $A$, change the initial moves into questions, and set them to now be justified by $\bullet$.  Finally, we take $B$ and set its initial moves to be justified by $A$'s initial moves.  The final construction, 
$A \rightarrow B$, takes two arenas $A$ and $B$ and produces a prearena, as shown below.  This is essentially the same as $A \Rightarrow B$ without the initial move $\bullet$.

\begin{figure}
\vspace{-20pt}
\small	
\[\begin{array}{rclcrcl}
	M_{A\Rightarrow B} & = & \makeset{\bullet} \uplus M_A \uplus M_B&& M_{A\otimes B} & = & I_A \times I_B \uplus \overline{I_A} \uplus \overline{I_B}\\
	I_{A\Rightarrow B} & = & \makeset{\bullet} && I_{A\otimes B} & = & I_A \times I_B \\
	\lambda_{A\Rightarrow B} & = & m \mapsto 
		\left\{ \begin{array}{ll} 
			PA &\quad \mbox{if $m = \bullet$} \\ 
			OQ &\quad \mbox {if $m \in I_A$} \\ 
			\overline{\lambda_A}(m) &\quad \mbox{if $m \in \overline{I_A}$} \\ 
			\lambda_B(m) &\quad \mbox{if $m \in M_B$} \end{array} 
		\right.  &&
		\lambda_{A\otimes B} & = & m \mapsto 
		\left\{ \begin{array}{ll} 
			PA &\quad \mbox{if $m \in I_A \times I_B$} \\
			\lambda_A(m) &\quad \mbox{if $m \in \overline{I_A}$} \\ 
			\lambda_B(m) &\quad \mbox{if $m \in \overline{I_B}$} \end{array} 
		\right.  \\
	\vdash_{A\Rightarrow B} & = & \makeset{(\bullet, i_A) | i_A \in I_A} && 
	\vdash_{A\otimes B} & = & \{ ((i_A,i_B),m) | i_A \in I_A \wedge i_B \in I_B \\
	                        &   & \cup \makeset{(i_A,i_B) | i_A \in I_A, i_B \in I_B} &&&& \quad \wedge (i_A \vdash_A m \vee i_B \vdash_B m)\}\\	
	                        &   & \cup \vdash_A \cup \vdash_B	&&&&\cup (\vdash_A \cap (\overline{I_A} \times \overline{I_A})) \\
	                        &&&&&&\cup (\vdash_B \cap (\overline{I_B} \times \overline{I_B}))\\
	\end{array}\]
\[\begin{array}{ll}
	M_{A\rightarrow B} \; = \; M_A \uplus M_B \qquad &
	\lambda_{A\rightarrow B}(m) \; = \; 
		\left\{ \begin{array}{ll} 
			OQ &\quad \mbox {if $m \in I_A$} \\ 
			\overline{\lambda_A}(m) &\quad \mbox{if $m \in \overline{I_A}$} \\ 
			\lambda_B(m) &\quad \mbox{if $m \in M_B$} \end{array} 
		\right.  
\\	
	I_{A\rightarrow B} \; = \; I_A 
&
	\vdash_{A\rightarrow B} \; = \; \makeset{(i_A,i_B) | i_A \in I_A, i_B \in I_B} \cup \vdash_A \cup \vdash_B	\\
	\end{array}\]	
	\caption{Constructions on Arenas 
	\label{fig:arenaConstructions}}
	\vspace{-15pt}
\end{figure}

We intend arenas to represent types, in particular $\sem{\unitt} = 1$, $\sem{\intt}  = \intnum$ (or a finite subset of $\intnum$ for $\frml$) and $\sem{\theta_1 \rightarrow \theta_2} = \sem{\theta_1} \Rightarrow \sem{\theta_2}$.  A term $x_1 : \theta_1, \ldots, x_n : \theta_n \vdash M : \theta$ will be represented by a \emph{strategy} for the prearena $\sem{\theta_1} \otimes \ldots \otimes \sem{\theta_n} \rightarrow \sem{\theta}$.

A \emph{justified sequence} in a prearena $A$ is a sequence of moves from $A$ in which the first move is initial and all other moves $m$ are equipped with a pointer to an earlier move $m'$, such that $m' \vdash_A m$. A \emph{play} $s$ is a justified sequence which additionally satisfies 
the standard conditions of Alternation, Well-Bracketing, and Visibility.

A \emph{strategy} $\sigma$ for prearena $A$ is a non-empty, even-prefix-closed set of plays from $A$, satisfying the determinism condition: if $s\, m_1, s\, m_2 \in \sigma$ then $s\, m_1 = s\, m_2$.
We can think of a strategy as being a playbook telling P how to respond by mapping odd-length plays to moves. 
A play is \emph{complete} if all questions have been answered.  Note that (unlike in the call-by-name case) a complete play is not necessarily maximal.  We denote the set of complete plays in strategy $\sigma$ by $\comp{\sigma}$.

In the game model of $\rml$, a term-in-context $x_1 : \theta_1, \ldots, x_n : \theta_n \vdash M : \theta$ is interpreted by a strategy of the prearena $\sem{\theta_1} \otimes \ldots \otimes \sem{\theta_n} \rightarrow \sem{\theta}$.  These strategies are defined by recursion over the syntax of the term.
Free identifiers $\seq{x : \theta}{x : \theta}$ are interpreted as \emph{copy-cat} strategies where P always copies O's move into the other copy of $\sem{\theta}$, $\lambda x. M$ allows multiple copies of $\sem{M}$ to be run, application $M N$ requires a form of parallel composition plus hiding and the other constructions can be interpreted using special strategies.  The game semantic model is fully abstract in the following sense.

\begin{theorem}[Abramsky and McCusker \cite{AM96,AM97b}]
If $\seq{\Gamma}{M~:~\theta}$ and $\seq{\Gamma}{N~:~\theta}$ are $\rml$ type sequents, then $\seq{\Gamma}{M \cong N}$ iff $\comp{\sem{\seq{\Gamma}{M}}} =  \comp{\sem{\seq{\Gamma}{N}}}$.
\end{theorem}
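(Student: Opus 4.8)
The plan is to prove this equivalence by the classical two-ingredient route: establish \emph{computational adequacy} of the game model with respect to the big-step operational semantics, together with a \emph{definability} (finite-element) theorem identifying which strategies arise as denotations of terms. Adequacy yields the ``$\Leftarrow$'' direction almost immediately, and definability supplies the separating context needed for ``$\Rightarrow$''. For adequacy I would first prove \emph{soundness}: by induction on the derivation of $\emptyset, M \Downarrow s, V$, each operational step is validated by an equation between the corresponding strategies --- $\beta$-reduction by the composition-plus-hiding law for application against a $\lambda$-abstraction, dereferencing and assignment by the laws of the cell strategy denoting $\newc$, the $\mkvar$ case by the definition of the bad-variable strategy, conditionals and arithmetic by the obvious strategies, and while-loops by their characterisation as least fixed points. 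Combined with continuity of all the strategy constructors this gives one half of adequacy: for closed $\vdash M : \unitt$, if $M{\Downarrow}$ then $\comp{\sem{\vdash M : \unitt}}$ contains the two-move complete play (the initial O-question answered by P). For the converse half I would use a logical relation between finite approximants of strategies and terms, exploiting that any complete play of $\sem{M}$ is already produced after finitely many unwindings of the fixed points in $\sem{M}$, each such unwinding matching a terminating computation; hence $M{\Downarrow}$ iff that two-move play lies in $\comp{\sem{\vdash M:\unitt}}$.

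Next, the easy direction. Suppose $\comp{\sem{\seq{\Gamma}{M}}} = \comp{\sem{\seq{\Gamma}{N}}}$. By induction on a closing context $C[-]$ one shows that $\comp{\sem{C[M]}}$ depends on $M$ only through $\comp{\sem{M}}$: the compositional constructions (pairing, $\lambda$-abstraction, application as composition with hiding, the state and bad-variable strategies) all preserve equality of complete-play sets, because the set of complete plays of a composite is determined by the complete plays of its components. Applying adequacy to $\vdash C[M], C[N] : \unitt$, we obtain $C[M]{\Downarrow}$ iff the two-move complete play is in $\comp{\sem{C[M]}} = \comp{\sem{C[N]}}$ iff $C[N]{\Downarrow}$, so $\seq{\Gamma}{M \cong N}$.

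For the hard direction, assume $\comp{\sem{\seq{\Gamma}{M}}} \neq \comp{\sem{\seq{\Gamma}{N}}}$ and pick a complete play $s$ in the symmetric difference, say $s \in \comp{\sem{M}} \setminus \comp{\sem{N}}$. The crucial ingredient is definability: a finite ``test'' strategy built from $s$ is the denotation of some RML term-in-context, obtained via a factorisation theorem that decomposes any compact strategy over the (well-bracketed, visible) prearenas in question into copycat-like components together with finitely many ground-type reference cells, the $\mkvar$ constructor being exactly what is needed to realise strategies that violate the good-variable discipline. Dualising this definable strategy produces a context $C[-]$ of type $\unitt$ that, composed with $\sem{M}$, yields the two-move complete play (since $s$ is matched), but composed with $\sem{N}$ does not (since $s \notin \comp{\sem{N}}$ and $C$ matches no other complete play of $\sem{N}$). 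By adequacy $C[M]{\Downarrow}$ while $C[N]$ diverges, so $\seq{\Gamma}{M \not\cong N}$, completing the proof.

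The main obstacle is the definability/factorisation step: one must show that \emph{every} compact strategy on a prearena denoting an RML sequent --- not merely the ``well-behaved'' ones --- is a denotation, which demands a finitary decomposition of strategies that respects alternation, well-bracketing and visibility so that the reconstructed term is well-typed, and it is precisely here that the bad-variable constructor $\mkvar$ is indispensable (without it one obtains only an inequationally fully abstract model, which is why RML rather than Reduced ML is the language admitting this exact characterisation). Since the statement is cited from Abramsky and McCusker, in the paper one simply invokes their proof; the sketch above records the shape of the argument one would reconstruct.
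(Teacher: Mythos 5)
This theorem is imported from Abramsky and McCusker's work and the paper offers no proof of its own, so there is nothing internal to compare against; your sketch faithfully reproduces the standard architecture of their argument (computational adequacy via soundness plus a logical-relations converse, congruence of the complete-plays equivalence for the easy direction, and definability/factorisation of compact strategies --- with $\mkvar$ essential for exact rather than merely inequational full abstraction --- to build the separating context for the hard direction). Your outline is correct and is essentially the same approach as the cited source.
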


\subsubsection*{Nested Data Class Memory Automata}

We will be using automata to recognise game semantic strategies as languages.  Equality of strategies can then be reduced to equivalence of the corresponding automata.  However, to represent strategies as languages we must encode pointers in the words.  
To do this we use data languages, in which every position in a word has an associated \emph{data value}, which is drawn from an infinite set (which we call the \emph{data set}).  Pointers between positions in a play can thus be encoded in the word by the relevant positions having suitably related data values.
Reflecting the hierarchical structure of the game semantic prearenas, we use a data set with a tree-structure.

Recall a \emph{tree} is a simple directed graph  $\anglebra{D,pred}$ where $pred: D \rightharpoonup D$ is the predecessor map defined on every node of the tree except the root, such that every node has a unique path to the root.  A node $n$ has level $l$ just if $pred^l(n)$ is the root (thus the root has level 0).  A tree is of level $l$ just if every node in it has level $\leq l$.  We define a \emph{nested data set} of level $l$ to be a tree of level $l$ such that each data value of level strictly less than $l$ has infinitely many children.
We fix a nested data set of level $l$, $\dataset$, and a finite alphabet $\Sigma$, to give a data alphabet $\D = \Sigma \times \dataset$.

We will use a form of automaton over these data sets based on class memory automata
\cite{BjorklundS10}.  Class memory automata operate over an unstructured data set, and on reading an input letter $(a,d)$, the transitions available depend both on the state the automaton is currently in, and the state the automaton was in after it last read an input letter with data value $d$.  We will be extending a weaker variant of these automata, in which the only acceptance condition is reaching an accepting state.
The variant of class memory automata we will be using, nested data class memory automata \cite{Cotton-BarrattMO14}, works similarly: on reading input $(a,d)$ the transitions available depend on the current state of the automaton, the state the automaton was in when it last read a descendant (under the $\pred$ function) of $d$, and the states the automaton was in when it last read a descendant of each of $d$'s ancestors.  We also add some syntactic sugar (not presented in \cite{Cotton-BarrattMO14}) to this formalism, allowing each transition to determine the automaton's memory of where it last saw the read data value and each of its ancestors: this does not extend the power of the automaton, but will make the constructions we make in this paper easier to define.

Formally, a Weak Nested Data Class Memory Automaton (WNDCMA) of level $l$ is a tuple $\anglebra{Q, \Sigma, \Delta, q_0, F}$ where $Q$ is the set of states, $q_0 \in Q$ is the initial state, $F \subseteq Q$ is the set of accepting states, and the transition function $\delta = \bigcup_{i=0}^{l} \delta_i$ where each $\delta_i$ is a function:
$$ \delta_i : Q \times \Sigma \times (\set{i} \times (Q \uplus \set{\bot})^{i+1}) \rightarrow \powerset (Q \times Q^{i+1}) $$
We write $Q_\bot$ for the set $Q \uplus \set{\bot}$, and may refer to the $Q_\bot^{j}$ part of a transition as its \emph{signature}.  The automaton is \emph{deterministic} if each set in the image of $\delta$ is a singleton.
A configuration is a pair $(q,f)$ where $q \in Q$, and $f: \dataset \rightarrow Q_\bot$ is a class memory function (i.e. $f(d) = \bot$ for all but finitely many $d \in \dataset$).  The initial configuration is $(q_0, f_0)$ where $f_0$ is the class memory function mapping every data value to $\bot$.  
The automaton can transition from configuration $(q,f)$ to configuration $(q',f')$ on reading input $(a,d)$ just if $d$ is of level-$i$, $(q',(t_0, t_1, \dots ,t_i)) \in \delta ( q, a, (i, f(pred^{i} (d), \dots, f(pred(d)), f(d)))$, and $f' = f[d \mapsto t_i, pred(d) \mapsto t_{i-1}, \dots, pred^{i-1}(d) \mapsto t_1, pred^i(d) \mapsto t_0]$.
A run is defined in the usual way, and is accepting if the last configuration $(q_n, f_n)$ in the run is such that $q_n \in F$. We say $w \in L(\calA)$ if there is an accepting run of $\calA$ on $w$.

Weak nested data class memory automata have a decidable emptiness problem, reducible to coverability in a well-structured transition system \cite{Cotton-BarrattMO14,Decker14}, and are closed under union and intersection by the standard automata product constructions.  Further, Deterministic WNDCMA are closed under complementation again by the standard method of complementing the final states.  Hence they have a decidable equivalence problem.

\section{P-Strict $\forml$}
In \cite{HMO11}, the authors identify a fragment of RML, the O-strict fragment, for which the plays in the game-semantic strategies representing terms have the property that the justification pointers of $O$-moves are uniquely reconstructible from the underlying moves.
Analogously, we define the P-strict fragment of RML to consist of typed terms in which the pointers for $P$-moves are uniquely determined by the underlying sequence of moves.  Then our encoding of strategies for this fragment will only need to encode the $O$-pointers: for which we will use data values.

\subsection{Characterising P-Strict $\rml$}

In working out which type sequents for $\rml$ lead to prearenas which are P-strict, it is natural to ask for a general characterisation of such prearenas.  The following lemma, which provides exactly that, is straightforward to prove:
\begin{lemma}
A prearena is P-strict iff there is no enabling sequence $q \vdash \dots \vdash q'$ in which both $q$ and $q'$ are P-questions.
\end{lemma}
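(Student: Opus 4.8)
The plan is to prove both directions by relating ambiguity of $P$-pointers in plays to enabling sequences in the prearena, via the visibility condition. The first move is a reduction: in any legal play a $P$-\emph{answer} must, by well-bracketing, point to the currently pending question, so once the underlying move sequence and the earlier pointers are fixed its justifier is fixed too; hence, for two legal plays with the same underlying moves, the first position (if any) at which their pointers disagree is a $P$-\emph{question}. So a prearena is $P$-strict iff in every legal play the justifier of every $P$-question is forced by the underlying moves, and by visibility that justifier lies in the current $P$-view. It therefore suffices to characterise when the $P$-view immediately before a $P$-question $m$ can contain two distinct occurrences of moves that enable $m$.

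For the direction ``forbidden enabling sequence $\Rightarrow$ not $P$-strict'': given $q_0 \vdash_A q_1 \vdash_A \cdots \vdash_A q_{2k}$ with $q_0, q_{2k}$ both $P$-questions (the number of links is even, as $\vdash_A$ flips the $O/P$ polarity), I would exhibit two legal plays with identical underlying move sequences that differ only in the justifier of one occurrence of $q_{2k}$. Starting from the initial move, play down an enabling chain to $q_0$ and then along $q_0\, q_1 \cdots q_{2k-1}$, each move justified by its predecessor, so that the $P$-view is this whole chain; then let $P$ play a second occurrence of $q_0$ (justified by the occurrence of its enabler, which is still in view), and have $O$ and $P$ alternately replay $q_1, \dots, q_{2k-1}$, each justified by its new predecessor. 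One checks directly that alternation holds automatically, that no question is answered twice, and that the resulting $P$-view contains both occurrences of $q_{2k-1}$; since $q_{2k-1} \vdash_A q_{2k}$, $P$ may justify $q_{2k}$ by either occurrence, yielding the two plays.

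For the converse, suppose a legal play has a $P$-question $m$ whose $P$-view $V$ contains two distinct occurrences of moves enabling $m$. I would use the rigid shape of $P$-views --- a non-initial $O$-move of $V$ is justified by the move preceding it in $V$, while a $P$-move of $V$ is justified by an earlier $O$-move --- together with two facts about these arenas and plays: a question that occurs twice in a $P$-view has a unique enabler (questions with several enablers are enabled solely by initial moves, and such a question cannot occur twice in a $P$-view), and in any play a question is answered at most once. Chasing justification pointers downwards from $m$, each step replaces ``two occurrences of a move'' by ``two occurrences of an enabling move one level closer to the root'': two occurrences of an $O$-question give two occurrences of its unique enabling $P$-move; two occurrences of a $P$-answer give two \emph{distinct} occurrences of the $O$-question it answers (distinct, else that question would be answered twice); and an $O$-answer is at once seen to be justified by a $P$-question. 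Since the arena has finite depth the chase terminates, and it can only stop at a $P$-question; reading off the links traversed then yields an enabling sequence between the two $P$-questions $m$ and that one. I expect the bookkeeping in this last direction to be the main obstacle --- in particular, verifying that the chase cannot reach the initial move without first passing a $P$-question, which is precisely where the ``answered at most once'' observation is used --- although each individual step is routine.
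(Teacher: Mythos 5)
The paper does not actually prove this lemma --- it is stated with the remark that it is ``straightforward to prove'' and no argument is given --- so there is no official proof to compare against. Your two-directional argument is the natural one and is essentially correct: the doubled-chain play for the forward direction does witness an ambiguous $P$-pointer (the polarity-flipping property of $\vdash_A$ guarantees the parity bookkeeping works, and the $P$-view computation keeps both copies of the chain in view, as you say), and the pointer chase for the converse correctly reduces ambiguity of a $P$-question's justifier to the existence of a $P$-question strictly below it in the enabling order. Two small imprecisions are worth flagging. First, your parenthetical claim that a question with several enablers cannot occur twice in a $P$-view is false for \emph{$P$-questions}: in a prearena such as $\seq{x:\intt,\, f:\unitt\rightarrow\unitt}{\unitt}$ the $P$-question interrogating $f$ is enabled by every initial move yet can appear twice in a $P$-view. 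This is harmless because your chase halts as soon as it meets a $P$-question, and for \emph{$O$-questions} the claim is exactly what you need and does hold (each non-initial $O$-move in a $P$-view immediately follows its justifier, so two occurrences of an $O$-question justified by the unique occurrence of an initial move are impossible). Second, $P$-answers may have several enablers (e.g.\ the acknowledgement of an integer argument is enabled by every integer), so the step ``two occurrences of a $P$-answer give two distinct occurrences of \emph{the} $O$-question it answers'' should read ``two distinct occurrences of $O$-questions enabling it''; the invariant of the chase is then ``two distinct move-occurrences, each heading an enabling sequence up to $m$, at a fixed arena depth,'' which still decreases and still terminates only at a $P$-question or in a contradiction with the uniqueness of the initial occurrence. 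With these adjustments your proof is complete.
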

Which type sequents lead to a P-question hereditarily justifying another P-question?    It is clear, from the construction of the prearena from the type sequent, that if a free variable in the sequent has arity $>1$ or order $>2$, the resulting prearena will have a such an enabling sequence, so not be P-strict.  Conversely, if a free variable is of a type of order at most $2$ and arity at most $1$, it will not break P-strictness.
On the RHS of the type sequent, things are a little more complex: there will be a ``first'' P-question whenever the type has an argument of order $\geq 1$.  To prevent this P-question hereditarily justifying another P-question, the argument must be of arity $1$ and order $\leq 2$.  Hence the P-strict fragment consists of type sequents of the following form:
$$
\seq{(\beta \rightarrow \dots \rightarrow \beta) \rightarrow \beta}{ ((\beta \rightarrow \dots \rightarrow \beta) \rightarrow \beta) \rightarrow \dots \rightarrow ((\beta \rightarrow \dots \rightarrow \beta) \rightarrow \beta) \rightarrow \beta }
$$
(where $\beta \in \set{\unitt, \intt}$.)

From results shown here and in \cite{Hopkins12}, we know that observational equivalence of all type sequents with an order 3 type or order 2 type with order 1 non-final argument on the RHS are undecidable.  Hence the only P-strict types for which observational equivalence may be decidable are of the form:
$
\seq{(\beta \rightarrow \dots \rightarrow \beta) \rightarrow \beta}{\beta \rightarrow \dots \rightarrow \beta}
$
or
$
\seq{(\beta \rightarrow \dots \rightarrow \beta) \rightarrow \beta}{\beta \rightarrow \dots \rightarrow \beta \rightarrow (\beta \rightarrow \beta) \rightarrow \beta}
$.
In this section we show that the first of these, which is the intersection of the P-strict fragment and $\forml$, does lead to decidability.

\begin{definition}
The P-Strict fragment of $\forml$, which we denote $\rmlpstr$, consists of typed terms of the form
$
\seq{x_1 : \widehat{\Theta_1}, \dots, x_n : \widehat{\Theta_1}}{M : \Theta_1}
$
where the type classes $\Theta_i$ are as described below:
$$\Theta_0 ::= \unitt \: | \: \intt  
\quad\quad
\Theta_1 ::= \Theta_0 \: | \: \Theta_0 \rightarrow \Theta_1 \: | \: \intreft  \quad\quad
\widehat{\Theta_1} ::= \Theta_0 \: | \: \Theta_1 \rightarrow \Theta_0 \: | \: \intreft 
$$
\end{definition}
This means we allow types of the form
$
\seq{(\beta \rightarrow \dots \rightarrow \beta) \rightarrow \beta}{\beta \rightarrow \dots \rightarrow \beta}
$
where $\beta \in \set{\unitt, \intt}$.

\subsection{Deciding Observational Equivalence of $\rmlpstr$}
Our aim is to decide observational equivalence by constructing, from a term $M$, an automaton that recognises a language representing $\sem{M}$.   As $\sem{M}$ is a set of plays, the language representing $\sem{M}$ must encode both the moves and the pointers in the play.  Since answer moves' pointers are always determined by well-bracketing, we only represent the pointers of question moves, and we do this with the nested data values.  The idea is simple: if a play $s$ is in $\sem{M}$ the language $L(\sem{M})$ will contain a word, $w$, such that the string projection of $w$ is the underlying sequence of moves of $s$, and such that:
\begin{itemize}
\item The initial move takes the (unique) level-0 data value; and
\item Answer moves take the same data value as that of the question they are answering; and
\item Other question moves take a fresh data value whose predecessor is the data value taken by the justifying move.
\end{itemize}
Of course, the languages recognised by nested data automata are closed under automorphisms of the data set, so in fact 
each play $s$ will be represented by an infinite set of data words, all equivalent to one another by automorphism of the data set.

\begin{theorem}\label{thm:pstrict-proof}
For every typed term $\seq{\Gamma}{M:\theta}$ in $\rmlpstr$ that is in canonical form we can effectively construct a deterministic weak nested data class memory automata, $\calA^M$, recognising the complete plays of $L(\sseq{\Gamma}{M})$.
\end{theorem}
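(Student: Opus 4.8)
The plan is to proceed by structural induction on the canonical form of $M$, constructing $\calA^M$ from the automata already obtained for the immediate subterms. Throughout I maintain the invariant that $\calA^M$ is deterministic and accepts exactly the data-word encodings (under the scheme described above) of the complete plays of $\sem{\seq{\Gamma}{M}}$; that its control state holds the current value of every base- and $\intreft$-typed variable in scope (a finite register file, since $\intt$ is finitary) together with a bounded ``program counter''; and that its class memory function records, for every thread O has opened, which phase of the computation that thread is in. The level $l$ of the nested data set is fixed once and for all for the fragment: because every free-variable type has order at most $2$ and arity at most $1$ and $\theta$ is first order, question-justification chains have bounded length, so a single finite $l$ suffices.

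The base cases --- $()$, $i$, $x^\beta$, $\suc{x^\beta}$, $\pre{x^\beta}$, and the leaves $!x^\intreft$ and $x^\intreft\aasg y^\intt$ --- have finite strategies recognised by small explicit level-$0$ deterministic automata. The constructs that do not enlarge the arena, namely $\cond{x^\beta}{C_1}{C_2}$, $\letin{x^\beta=C_1}{C_2}$, $\while{C_1}{C_2}$ and $\letin{x=\newc}{C}$, are handled by plumbing the component automata together in the control: a conditional is a finite case split on the value read for $x$; sequencing glues $\calA^{C_1}$ to $\calA^{C_2}$ at the transition on which $C_1$ delivers its single, finitely-valued result; a loop adds a back-edge; and for $\newc$ the memory cell is finite-state, so it is simply folded in --- every $!x$ and $x\aasg y$ becomes a silent update of a fresh register, which is also the only hidden interaction of unbounded length anywhere in the construction. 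In all of these cases determinism and membership in the automaton class are immediate, and the threads of the shared context $\Gamma$ stay correctly distinguished, which the freshness of data values takes care of.

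The substantive cases are those that open up the arena: $\lambda x^\beta.C$ on the right, and the free-identifier applications $\letin{x=zy^\beta}{C}$, $\letin{x=z\,\makevar{\lambda u^\unitt.C_1}{\lambda v^\intt.C_2}}{C}$ and $\letin{x=z(\lambda w^\theta.C')}{C}$ on the left, together with $\makevar{\lambda x^\unitt.C_1}{\lambda y^\intt.C_2}$. Here O may open, and interleave, arbitrarily many copies of a subcomputation, each copy a fresh thread that the construction tags with a new data value nested, via $\pred$, under the data value of the move that spawned it; the class memory function is exactly what lets $\calA^M$ recover, on reading an O-move, which thread --- hence which phase --- the move belongs to, and the syntactic sugar on transitions is what lets it refresh this bookkeeping. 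For $\lambda x^\beta.C$ one makes $\calA^C$ re-entrant, one thread per O-application; for $\letin{x=z(\lambda w^\theta.C')}{C}$, $\calA^M$ plays the move invoking $z$ and then runs the interleaving of a re-entrant $\calA^{C'}$ (one copy per O-invocation of the argument) with $\calA^C$ (entered once O returns $z$'s result and it is bound to $x$) --- and all of these are genuine visible moves in $\sem{z}$'s part of the prearena, so, as noted, there is nothing to hide here. I expect this thread bookkeeping to be the main obstacle: one must verify that the data values the construction allocates always match the justification structure demanded by the encoding (so that accepted words are in bijection with complete plays), that the level stays within the fixed bound, and that re-entrancy does not break determinism --- in particular that from its state plus the class memory of the data value just read the automaton can always tell whether an incoming O-move legitimately continues an existing thread or must begin a new one. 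Finally, restricting to complete plays is arranged by declaring accepting exactly those control states in which no question is pending; well-bracketing together with the bounded thread structure of the fragment makes ``no pending question'' a property of the finite control, so it is a legitimate acceptance condition for a weak automaton.
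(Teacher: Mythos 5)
Your proposal follows essentially the same route as the paper's proof: induction over canonical forms, with nested data values naming threads, the class memory function recording each paused thread's phase, and the real work concentrated in exactly the thread-opening cases you single out ($\lambda$-abstraction and the applications of free identifiers), with acceptance at the locally-complete states. The one detail where the paper's construction departs from your stated invariant is the storage of reference-cell contents: because an O thread switch forces the control state to jump to wherever the resumed thread was paused, the paper keeps the current value of each locally allocated reference cell in the class memory of the unique level-0 data value (in addition to the control state) so that it survives interleaving --- your pure register-file invariant would need this adjustment, but that is a refinement within the same construction rather than a different argument or a gap.
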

\begin{proof}
We prove this by induction over the canonical forms.  We note that for each canonical form construction, if the construction is in $\rmlpstr$ then each constituent canonical form must also be.  
For convenience of the inductive constructions, we in fact construct automata $\calA^M_\gamma$ recognising $\sseq{\Gamma}{M}$ restricted to the initial move $\gamma$.  
Here we sketch two illustrative cases.
A full proof is provided in Appendix \ref{appendix:pstrict}.

\textbf{\boldmath{$\lambda x^\beta . M : \beta \rightarrow \theta$}.}
The prearenas for $\sem{M}$ and $\sem{\lambda x^\beta. M}$ are shown in Figure \ref{fig:prearenas-lambda}.  Note that in this case we must have that $\seq{\Gamma, x: \beta}{M:\theta}$, and so the initial moves in $\sem{M}$ contain an $x$-component.  We therefore write these initial moves as $(\gamma, i_x)$ where $\gamma$ is the $\Gamma$-component and $i_x$ is the $x$-component.

\begin{figure}
\vspace{-10pt}
\begin{subfigure}{0.48\textwidth}
	\[\begin{tikzpicture}[arena]
		\node[clear](q1){$q_1$};
		\node[clear](a1)[below = of q1]{$a_1$} edge (q1);
		\node[clear](dots) [below = of a1] {$\vdots$} edge (a1);
		\node[clear](ql)[below = of dots]{$q_{n}$} edge (dots);
		\node[clear](al)[below = of ql]{$a_{n}$} edge (ql);
		\node[tri, shape border uses incircle,shape border rotate=60,scale = 0.7](gamma) at (q1.220){$\sem{\Gamma}$};	
	\end{tikzpicture}\]
 	\subcaption{$\sem{\seq{\Gamma, \beta}{\theta}}$}
\end{subfigure}
\begin{subfigure}{0.48\textwidth}
	\[\begin{tikzpicture}[arena]
		\node[clear](q0){$q_0$};
		\node[clear](a0)[below = of q0]{$a_0$} edge (q0);
		\node[clear](q1)[below = of a0]{$q_1$} edge (a0);
		\node[clear](a1)[below = of q1]{$a_1$} edge (q1);
		\node[clear](dots) [below = of a1] {$\vdots$} edge (a1);
		\node[clear](ql)[below = of dots]{$q_{n}$} edge (dots);
		\node[clear](al)[below = of ql]{$a_{n}$} edge (ql);
		\node[tri, shape border uses incircle,shape border rotate=60,scale = 0.7](gamma) at (q0.220){$\sem{\Gamma}$};	
	\end{tikzpicture}\]
	\subcaption{$\sem{\seq{\Gamma}{\beta \rightarrow \theta}}$}
\end{subfigure}
\caption{Prearenas for $\sem{\seq{\Gamma, x:\beta}{M:\theta}}$ and $\sem{\seq{\Gamma}{\lambda x^\beta . M : \beta \rightarrow \theta}}$}\label{fig:prearenas-lambda}
\vspace{-15pt}
\end{figure}

P's strategy $\sem{\lambda x^\beta . M}$ is as follows: after an initial move ${\gamma}$, P plays the unique $a_0$-move $\bullet$, and waits for a $q_1$-move.  Once O plays a $q_1$-move $i_x$, P plays as in $\sem{\seq{\Gamma,x}{M}}$ when given an initial move $({\gamma}, i_x)$.
However, as the $q_1$-moves are not initial, it is possible that O will play another $q_1$-move, $i_x'$.  Each time O does this it opens a new thread which P plays as per $\sem{\seq{\Gamma,x}{M}}$ when given initial move $({\gamma}, i_x')$.  Only O may switch between threads, and this can only happen immediately after P plays an $a_j$-move (for any $j$).  

By our inductive hypothesis, for each initial move $(\gamma, i_x)$ of $\sem{\seq{\Gamma, x : \beta}{\theta}}$ we have an automaton $\calA^M_{\gamma,i_x}$ recognising the complete plays of $\sem{\seq{\Gamma, x: \beta}{M:\theta}}$ starting with the initial move $(\gamma, i_x)$.  
We construct the automaton $\calA^{\lambda x . M}_\gamma$ by taking a copy of each $\calA^M_{\gamma,i_x}$, and quotient together the initial states of these automata to one state, $p$, (which by conditions on the constituent automata we can assume has no incoming transitions).  This state $p$ will hold the unique level-0 data value for the run, and states and transitions are added to have initial transitions labelled with $q_0$ and $a_0$, ending in state $p$.  The final states will be the new initial state, the quotient state $p$, and the states which are final in the constituent automata.
The transitions inside the constituent automata fall into two categories: those labelled with moves corresponding to the RHS of the term in context $\seq{\Gamma}{M}$, and those labelled with moves corresponding to the LHS.  Those transitions corresponding to moves on the RHS are altered to have their level increased by 1, with their signature correspondingly altered by requiring a level-0 data value in state $p$.  Those transitions corresponding to moves on the LHS retain the same level, but have the top value of their data value signature replaced with the state $p$.  Finally, transitions are added between the constituent automata to allow switching between threads: whenever there is a transition out of a final state in one of the automata, copies of the transition are added from every final state (though keeping the data-value signature the same).  
Note that the final states correspond to precisely the points in the run where the environment is able to switch threads.

\textbf{\boldmath{$\letin{x^\beta = M}{N}$}.}
Here we assume we have automata recognising $\sem{M}$ and $\sem{N}$.  The strategy $\sem{\letin{x^\beta = M}{N}}$ essentially consists of a concatenation of $\sem{M}$ and $\sem{N}$, with the result of playing $\sem{M}$ determining the value of $x$ to use in $\sem{N}$.  Hence the automata construction is very similar to the standard finite automata construction for concatenation of languages, though branching on the different results for $\sem{M}$ to different automata for $\sem{N}$.
\end{proof}

\begin{corollary}
Observational equivalence of terms in $\rmlpstr$ is decidable
\end{corollary}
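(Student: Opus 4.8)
The plan is to obtain the Corollary by assembling Theorem~\ref{thm:pstrict-proof} with three facts already in hand: (i) every $\rml$ term is effectively convertible to an observationally equivalent term in canonical form \cite[Prop.~3.3]{Hopkins12}; (ii) the Abramsky--McCusker full abstraction theorem, i.e.\ $\seq{\Gamma}{M \cong N}$ iff $\comp{\sem{\seq{\Gamma}{M}}} = \comp{\sem{\seq{\Gamma}{N}}}$; and (iii) deterministic WNDCMA have a decidable equivalence problem (Section~2). Given two $\rmlpstr$ terms-in-context $\seq{\Gamma}{M:\theta}$ and $\seq{\Gamma}{N:\theta}$, I would first convert both to canonical form, having checked that canonicalisation keeps a term inside $\rmlpstr$: the canonical-form grammar only decomposes subterms and introduces $\rmlterm{let}$-bindings at base type together with local cells $\newc$, none of which raises the order or arity of a free-variable type or of the result type, so the fragment is preserved. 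Then Theorem~\ref{thm:pstrict-proof} yields deterministic WNDCMA $\calA^M$ and $\calA^N$ recognising the encodings of $\comp{\sem{\seq{\Gamma}{M}}}$ and $\comp{\sem{\seq{\Gamma}{N}}}$.

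Next I would argue that $L(\calA^M) = L(\calA^N)$ if and only if $\comp{\sem{\seq{\Gamma}{M}}} = \comp{\sem{\seq{\Gamma}{N}}}$. One direction is immediate from the way the automata are built. For the other, the key point is that the data-word encoding is faithful: distinct complete plays are never encoded by the same data word, even up to automorphism of the nested data set (under which the recognised languages are closed). This is exactly where P-strictness is used. In a P-strict prearena the justifier of every P-question is recoverable from the underlying move sequence, answer pointers are recovered by well-bracketing, and every O-question's justifier is recorded faithfully by the ``fresh child of the justifier's data value'' discipline; hence a word together with its data values determines the play, and the encoding descends to a bijection between plays and automorphism classes of data words of the appropriate shape. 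Therefore the two languages coincide precisely when the two sets of complete plays do, which by full abstraction is precisely $\seq{\Gamma}{M \cong N}$.

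Finally I would invoke decidability of equivalence for deterministic WNDCMA — complementation by complementing final states, intersection by the product construction, and decidable emptiness via reduction to coverability in a well-structured transition system — to decide $L(\calA^M) = L(\calA^N)$, and hence $\seq{\Gamma}{M \cong N}$. I expect the only delicate step to be the faithfulness argument in the middle paragraph: one must check that P-strictness genuinely eliminates all ambiguity in pointer reconstruction from the move-plus-data word, and that the canonical-form translation does not leave the fragment. The automata-theoretic endpoint is then routine given the results quoted above.
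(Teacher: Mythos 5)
Your proposal is correct and follows essentially the same route the paper intends: Theorem~\ref{thm:pstrict-proof} plus conversion to canonical form, the Abramsky--McCusker full abstraction theorem, and the decidable equivalence problem for deterministic WNDCMA stated in the preliminaries. The faithfulness of the pointer encoding and the preservation of the fragment under canonicalisation, which you rightly flag as the only delicate points, are exactly the facts the paper relies on implicitly, so your argument matches the intended one.
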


\section{A Restricted Fragment of $\forml$}
It is important, for the reduction to nested data automata for $\rmlpstr$, that variables cannot be partially evaluated: in prearenas where variables have only one argument, once a variable is evaluated those moves cannot be used to justify any future moves.  If we could later return to them we would need ensure that they were accessed only in ways which did not break visibility.  We now show that this can be done, using a slightly different encoding of pointers, for a fragment in which variables have unlimited arity, but each argument for the variable must be evaluated all at once.  This means that the variables have their O-moves uniquely determined by the underlying sequence of moves.

\subsection{Fragment definition}

\begin{definition}
The fragment we consider in this section, which we denote $\rforml$, consists of typed terms of the form
$
\seq{x_1: \Theta_2^1, \dots, x_n : \Theta_2^1}{M:\Theta_1}
$
where the type classes $\Theta_i$ are as described below:
\begin{align*}
&\Theta_0 ::= \unitt \: | \: \intt  &\quad &\Theta_1^1 ::= \Theta_0 \: | \: \Theta_0 \rightarrow \Theta_0 \: | \: \intreft \\
&\Theta_1 ::= \Theta_0 \: | \: \Theta_0 \rightarrow \Theta_1 \: | \: \intreft &\quad &\Theta_2^1 ::= \Theta_1 \: | \: \Theta_1^1 \rightarrow \Theta_2^1 
\end{align*}
\end{definition}

\begin{wrapfigure}{r}{0.53\textwidth}
\vspace{-20pt}
	\[\begin{tikzpicture}[arena]		
		\node[clear](q0){$q_0$};
		
		\node[clear](vdots)[below =of q0] {$\vdots$} edge(q0);
		\node[clear](hdots1)[left =of vdots] {$\dots$};

		\node[clear](a0)[right = 2em of vdots]{$a_0$} edge(q0);
		\node[clear](q1)[below =of a0]{$q_1$} edge(a0);
		\node[clear](a1)[below =of q1]{$a_1$} edge(q1);
		\node[clear](vdots1)[below =of a1]{$\vdots$} edge(a1);
		\node[clear](qn)[below =of vdots1]{$q_n$} edge(vdots1);
		\node[clear](an)[below =of qn]{$a_n$} edge(qn);

		\node[clear](qt1)[left =of hdots1]{$q^{(1)}$} edge(q0);
		\node[clear](q0t1)[below left =of qt1] {$q^{(1)}_0$} edge(qt1);
		\node[clear](a0t1)[below left =of q0t1] {$a^{(1)}_0$} edge(q0t1);
		\node[clear](at1)[below =of qt1] {$a^{(1)}$} edge(qt1);
		\node[clear](qt2)[below =of at1]{$q^{(2)}$} edge(at1);
		\node[clear](q0t2)[below left =of qt2] {$q^{(2)}_0$} edge(qt2);
		\node[clear](a0t2)[below left =of q0t2] {$a^{(2)}_0$} edge(q0t2);
		\node[clear](at2)[below =of qt2] {$a^{(2)}$} edge(qt2);
		\node[clear](vdots2)[below =of at2]{$\vdots$} edge(at2);
		\node[clear](qtk)[below =of vdots2]{$q^{(k)}$} edge(vdots2);
		\node[clear](q0tk)[below left =of qtk] {$q^{(k)}_0$} edge(qtk);
		\node[clear](a0tk)[below left =of q0tk] {$a^{(k)}_0$} edge(q0tk);
		\node[clear](atk)[below =of qtk] {$a^{(k)}$} edge(qtk);
		
		\node[draw,dotted,fit=(a0) (q1) (a1) (vdots1) (qn) (an),label=A] {};
		\node[draw,dotted,fit=(qt1) (at1) (qt2) (at2) (vdots2) (qtk) (atk),label=B] {};
		\node[draw,dotted,fit=(q0t1) (a0t1) (q0t2) (a0t2) (q0tk) (a0tk) ,label=C] {};

		\end{tikzpicture}\]
		\vspace{-10pt}
\caption{Shape of arenas in $\rforml$}\label{fig:1frorml-prearenas}
\vspace{-30pt}
\end{wrapfigure}
This allows types of the form
$
\seq{(\beta \rightarrow \beta) \rightarrow \dots \rightarrow (\beta \rightarrow \beta) \rightarrow \beta}{\beta \rightarrow \dots \rightarrow \beta}
$
where $\beta \in \set{\unitt,\intt}$.  The shape of the prearenas for this fragment is shown in Figure \ref{fig:1frorml-prearenas}.  Note that moves in section $A$ of the prearena (marked in Figure \ref{fig:1frorml-prearenas}) relate to the type $\Theta_1$ on the RHS of the typing judgement, and that we need only represent O-pointers for this section, since the P-moves are all answers so have their pointers uniquely determined by well-bracketing.  Moves in sections $B$ and $C$ of the prearena correspond to the types on the LHS of the typing judgement.  Moves in section $B$ need only have their P-pointers represented, since the O-moves are all answer moves.  Moves in section $C$ have both their O- and P-pointers represented by the underlying sequence of moves: the P-pointers because all P-moves in this section are answer moves, the O-pointers by the visibility condition.

\subsection{Deciding Observation Equivalence}

Similarly to the P-Strict case, we provide a reduction to weak nested data class memory automata that uses data values to encode O-pointers.  However, this time we do not need to represent any O-pointers on the LHS of the typing judgement, so use data values only to represent pointers of the questions on the RHS.  
We do, though, need to represent P-pointers of moves on the LHS.  This we do using the same technique used for representing P-pointers in \cite{HMO11}: in each word in the language we represent only one pointer by using a ``tagging" of moves: the string $s \psource{m} s' \ptarget{m'}$ is used to represent the pointer $\Pstr{(s) {s} \ (m) {m} \ (s') {s'} \ (m'-m) {m'}}$.  Because P's strategy is deterministic, representing one pointer in each word is enough to uniquely reconstruct all P-pointers in the plays from the entire language.  
Due to space constraints we do not provide a full explanation of this technique in this paper: for a detailed discussion see \cite{Hopkins12,HMO11}.  Hence for a term $\sseq{\Gamma}{M:\theta}$ the data language we seek to recognise, $L(\sseq{\Gamma}{M})$ represents pointers in the following manner:
\begin{itemize}
\item The initial move takes the (unique) level-0 data value;
\item Moves in $\sem{\Gamma}$ (i.e. in section $B$ or $C$ of the prearena) take the data value of the previous move;
\item Answer moves in $\sem{\theta}$ (i.e. in section $A$ of the prearena) take the data value of the question they are answering; and
\item Non-initial question moves in $\sem{\theta}$ (i.e. in section $A$ of the prearena) take a fresh data value nested under the data value of the justifying answer move.
\end{itemize}

\begin{theorem}\label{thm:forml-proof}
For every typed term $\seq{\Gamma}{M:\theta}$ in $\rforml$ that is in canonical form we can effectively construct a deterministic weak nested data class memory automaton, $\calA_M$, recognising the complete plays of $L(\sseq{\Gamma}{M})$.
\end{theorem}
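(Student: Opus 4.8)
The plan is to prove Theorem~\ref{thm:forml-proof} by induction over the canonical forms, following the same overall strategy as the proof of Theorem~\ref{thm:pstrict-proof}. First I would observe that every immediate subterm of an $\rforml$ canonical form is again (a term-in-context of) $\rforml$, after possibly extending the context with a fresh variable of type $\beta$, $\beta\rightarrow\beta$ or $\intreft$ --- in particular the result $x$ of a partial application of a left-hand variable $z$ has a type that is a ``suffix'' of $z$'s type and so still fits the shape $\Theta_2^1$ --- so the induction is well-founded. As in the P-strict case it is convenient to build, for each initial move $\gamma$ of the prearena, a deterministic WNDCMA $\calA^M_\gamma$ recognising $L(\sseq{\Gamma}{M})$ restricted to plays starting with $\gamma$, and then to obtain $\calA_M$ by gluing the $\calA^M_\gamma$ at a common fresh initial state. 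The invariant I would maintain is: exactly one state $p$ carries the unique level-$0$ data value; the accepting states are precisely the configurations at which O may legitimately open, resume or close a thread; and the automaton is deterministic --- hence, being weak, complementable, which yields the corollary that observational equivalence of $\rforml$ is decidable.

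The routine cases follow the pattern of Theorem~\ref{thm:pstrict-proof}. The base cases $()$, $i$, $x^\beta$, $\suc{x^\beta}$, $\pre{x^\beta}$, $\cond{x^\beta}{\can}{\can}$, $\assg{x^\intreft}{y^\intt}$ and $!x^\intreft$ use no data and are finite-state. For $\letin{x^\beta = \can}{\can}$ and $\while{\can}{\can}$ I would use the concatenation- and loop-style constructions, branching on the base-type value returned by the first component; $\letin{x = \newc}{\can}$ records the current contents of the new cell in a bounded extra component of the state; and $\makevar{\lambda u^\unitt.\can}{\lambda v^\intt.\can}$ juxtaposes the two argument automata. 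Abstraction $\lambda x^\theta.\can$ is handled exactly as the $\lambda x^\beta.M$ case of Theorem~\ref{thm:pstrict-proof}: one copy of the constituent automaton per value of the newly visible O-question is placed below $p$, the transitions for right-hand-side moves are pushed one data level deeper, and thread-switching transitions are added between all accepting states. The case $\letin{x = z y^\beta}{\can}$ prefixes $\calA^{\can}$ with a $z$-question and the reading of O's answer --- both moves carrying the data value of the preceding move --- with a finite branch on that answer.

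The substantive case is the application of a left-hand variable to a functional argument, $\letin{x = z(\lambda u^{\beta\rightarrow\beta}.P)}{N}$, together with its bad-variable variant $\letin{x = z\,\makevar{\lambda u^\unitt.P_1}{\lambda v^\intt.P_2}}{N}$. Since $z$ is a variable, its strategy is a copy-cat, so $\sem{z(\lambda u^{\beta\rightarrow\beta}.P)}$ consists of O driving $z$ and, each time $z$ ``evaluates its argument'', P running $\sem{\seq{\Gamma,u}{P}}$; this opens a thread in section $C$ of the prearena of Figure~\ref{fig:1frorml-prearenas}. Concretely I would splice in, at the argument-call point, a copy of $\calA^{P}_{\gamma,i_u}$ for each argument value $i_u$, following the composition-plus-hiding recipe: all moves internal to the argument, and all the $\Gamma$-moves of $\sem{P}$, become section-$B$/$C$ moves of the outer prearena and simply copy the data value of the previous move, so no new data nesting is introduced on the left; the single P-pointer a word is allowed to record is carried by the move-tagging technique of \cite{HMO11,Hopkins12} and survives the splicing unchanged. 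Once $z$ returns, its answer fixes $x$ and control passes to $\calA^{N}$, exactly as in the $\letin{x^\beta = \can}{\can}$ case. The structural fact that makes all of this finite-state-controllable is that every argument type of $z$ has arity at most $1$: each interrogation of the argument is evaluated atomically, so once it returns its moves can never again justify a later move; hence the automaton needs only a bounded ``thread stack'' in its finite control to enforce well-bracketing and the O-side of visibility, which is exactly why the O-pointers in section $C$ are forced by the underlying move sequence.

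I expect this last case to be the main obstacle: it requires running three bookkeeping disciplines at once --- the nested data values recording right-hand O-pointers, the single-tag encoding of left-hand P-pointers, and the composition-with-hiding that weaves the argument automaton into the caller --- while keeping the automaton deterministic and weak and never accepting a non-alternating, badly bracketed or visibility-violating play. The arity-$1$ restriction on the arguments of left-hand variables is precisely what bounds the thread stack, and thereby keeps the construction inside the class of weak nested data class memory automata; relaxing it so that an argument could be partially evaluated and later revisited would force the automaton to remember unboundedly much context to preserve visibility, and this is the obstruction discussed in the next section. The remaining canonical forms are minor variations of the cases above, and the full construction --- together with the verification that $\calA_M$ recognises exactly the complete plays of $\sseq{\Gamma}{M}$ and is deterministic and weak --- proceeds by the same case analysis.
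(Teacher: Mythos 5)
Your proposal takes essentially the same route as the paper's proof: induction over canonical forms with automata indexed by initial moves, the same pointer encoding (nested data values for right-hand O-pointers, the single-tag scheme of \cite{HMO11} for left-hand P-pointers, left-hand moves copying the data value of the previous move), and the same handling of the key case $\letin{x = z(\lambda u.P)}{N}$ by storing a bounded return point in the finite control, which is sound precisely because the arity-1 arguments of $z$ are evaluated atomically. The one detail to fix is the $\lambda$-abstraction case: since in this encoding left-hand moves take the data value of the most recent right-hand move, \emph{all} transitions of the constituent automata --- not only those for right-hand-side moves --- must be pushed one level deeper under the new level-0 value; the paper singles this out as the only point where the construction differs from the P-strict one, whereas your description gives the P-strict treatment (left-hand transitions kept at the same level with their top signature entry replaced), which is inconsistent with your own stated encoding.
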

\begin{proof}
This proof takes a similar form to that of Theorem \ref{thm:pstrict-proof}: by induction over canonical forms. We here sketch the $\lambda$-abstraction case.  A full proof is provided in Appendix \ref{appendix:rforml}.

\textbf{\boldmath{$\lambda x^\beta . M : \beta \rightarrow \theta$}.}  
This construction is almost identical to that in the proof of Theorem \ref{thm:pstrict-proof}: again the strategy for P is interleavings of P's strategy for $M : \theta$.  The only difference in the construction is that where in the encoding for Theorem \ref{thm:pstrict-proof} the moves in each $\calA^M_{\gamma,i_x}$ corresponding to the LHS and RHS of the prearena needed to be treated separately, in this case they can be treated identically: all being nested under the new level-0 data value.  We demonstrate this construction in Example \ref{ex:automata-lambda-construction}
\end{proof}

\begin{example}\label{ex:automata-lambda-construction}
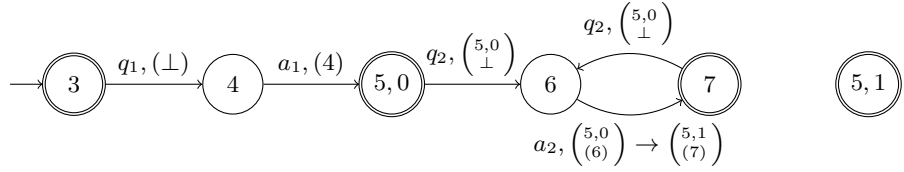
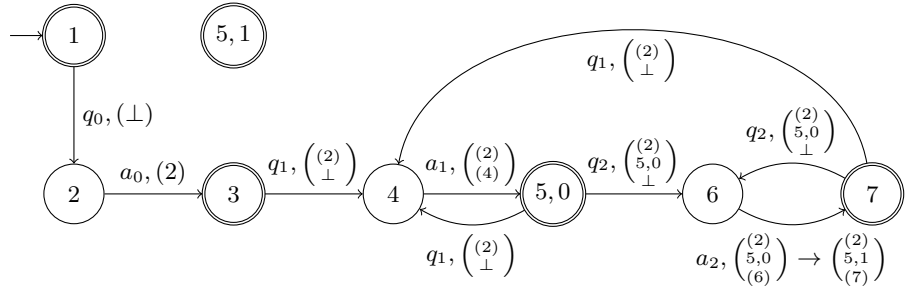
\begin{figure}
\vspace{-20pt}
\begin{subfigure}[b]{0.98\textwidth}
\begin{tikzpicture}[->,baseline,auto,node distance=2.09cm,font=\small]
  \node[state,initial,accepting] (3) {$3$};  
  \node[state] (4) [right of=3] {$4$};
  \node[state,accepting] (5) [right of=4] {$5,0$};
  \node[state](6) [right of=5] {$6$};
  \node[state,accepting](7) [right of=6] {$7$};
  \node[state,accepting] (5') [right of=7] {$5,1$};

  \path (3) edge              node {$q_1,(\bot)$} 		(4)
   (4) edge              node {$a_1,(4)$} 		(5)
   (5) edge              node {$q_2,\twovec{5,0}{\bot}$} 		(6)
   (6) edge  [bend right,below]            node {$a_2,\twovec{5,0}{(6)} \rightarrow \twovec{5,1}{(7)} $} 		(7)
   (7) edge  [bend right,above]           node {$q_2,\twovec{5,0}{\bot}$} 		(6);
\end{tikzpicture}
\caption{Automaton for $\sseq{}{\letin{c = \newc}{\lambda y^\unitt . \cond{\deref c = 0}{c := 1}{\Omega} }}$}\label{subfig:automata-letin}
\end{subfigure}

\begin{subfigure}[b]{1\textwidth}
\begin{tikzpicture}[->,baseline,auto,node distance=2.1cm,font=\small]
  \node[state,initial,accepting] (1) {$1$};  
  \node[state] (2) [below of=1] {$2$};
  \node[state,accepting] (3) [right of=2] {$3$};  
  \node[state] (4) [right of=3] {$4$};
  \node[state,accepting] (5) [right of=4] {$5,0$};
  \node[state,accepting] (5') [above of=3] {$5,1$};
  \node[state](6) [right of=5] {$6$};
  \node[state,accepting](7) [right of=6] {$7$};
  \path (1) edge              node {$q_0,(\bot)$} 		(2)
   (2) edge              node {$a_0,(2)$} 		(3)
   (3) edge              node {$q_1,\twovec{(2)}{\bot}$} 		(4)
   (4) edge              node {$a_1,\twovec{(2)}{(4)}$} 		(5)
   (5) edge              node {$q_2,\threevec{(2)}{5,0}{\bot}$} 		(6)
   (6) edge  [bend right,below]            node {$a_2,\threevec{(2)}{5,0}{(6)} \rightarrow \threevec{(2)}{5,1}{(7)} $} (7)
   (7) edge  [bend right,above]           node {$q_2,\threevec{(2)}{5,0}{\bot}$} 		(6)
   (5) edge  [bend left]            node {$q_1,\twovec{(2)}{\bot}$} 		(4)
   (7) edge  [out=100,in=80]            node {$q_1,\twovec{(2)}{\bot}$} 		(4)
   ;
\end{tikzpicture}
\caption{Automaton for $\sseq{}{\lambda x^\unitt . \letin{c = \newc}{\lambda y^\unitt . \cond{\deref c = 0}{c := 1}{\Omega} }}$}\label{subfig:automata-lambda}
\end{subfigure}
\caption{Automata recognising strategies}\label{fig:automata-example}
\vspace{-10pt}
\end{figure}

Figure \ref{fig:automata-example} shows two weak nested data class memory automata.  We draw a transition $p, a, (j, \threevec{s_0}{\vdots}{s_j}) \rightarrow p', \threevec{s_0'}{\vdots}{s_j'} \in \delta$ as an arrow from state $p$ to $p'$ labelled with ``$a, \threevec{s_0}{\vdots}{s_j} \rightarrow \threevec{s_0'}{\vdots}{s_j'}$''.  
We omit the ``$\rightarrow \threevec{s_0'}{\vdots}{s_j'}$'' part of the label if $s_j' = p'$ and $s_i = s_i'$ for all $i \in \set{0,1, \dots, j-1}$.

The automaton obtained by the constructions in Theorem \ref{thm:forml-proof} for the term-in-context $\sseq{}{\letin{c = \newc}{\lambda y^\unitt . \cond{\deref c = 0}{c := 1}{\Omega} }}$ is shown in Figure \ref{subfig:automata-letin} (to aid readability, we have removed most of the dead and unreachable states and transitions).  Note that we have the states $(5,0)$ and $(5,1)$ - here the second part of the state label is the value of the variable $c$: the top-level data value will remain in one of these two states, and by doing so store the value of $c$ at that point in the run.  The move $q_2$ in this example corresponds to the environment providing an argument $y$: note that in a run of the automaton the first time a $y$ argument is passed, the automaton proceeds to reach an accepting state, but in doing so sets the top level data value to the state $(5,1)$.  This means the outgoing transition shown from state $7$ cannot fire.

The automaton for $\sseq{}{\lambda x^\unitt . \letin{c = \newc}{\lambda y^\unitt . \cond{\deref c = 0}{c := 1}{\Omega} }}$ is shown in Figure \ref{subfig:automata-lambda} (again, cleaned of dead/unreachable transitions for clarity).  Note that this contains the first automaton as a sub-automaton, though with a new top-level data value added to the transitions. The $q_1$ move now corresponds to providing a new argument for $x$, thus starting a thread. Transitions have been added from the accepting states $(5)$ and $(7)$, allowing a new $x$-thread to be started from either of these locations.  
Note that the transition from $(7)$ to $(6)$, which could not fire before, now can fire because several data values (corresponding to different $x$-threads) can be generated and left in the state $(5,0)$.  
\end{example}

\section{Undecidable Fragments}
In this section we consider which type sequents and forms of recursion are expressive enough to prove undecidability.  The proofs of the results this section proceed by identifying terms such that the induced complete plays correspond to runs of Turing-complete machine models.  
Full proofs are given in Appendix \ref{appendix:undecidability}.

\paragraph{On the Right of the Turnstile.}
In~\cite{Mur03} it is shown that observational equivalence is undecidable for 5th-order terms.  The proof takes the strategy that was used to show undecidability for 4th-order IA and finds an equivalent call-by-value strategy.  
It is relatively straightforward to adapt the proof to show that observational equivalence is undecidable at 3rd-order types, e.g.\ $((\unitt \rightarrow \unitt) \rightarrow \unitt) \rightarrow \unitt$.
A further result in~\cite{Mur05} showed that the problem is undecidable at the type $(\unitt \rightarrow \unitt) \rightarrow (\unitt \rightarrow \unitt) \rightarrow \unitt$.
Both results easily generalise to show that the problem is undecidable at \emph{every} 3rd-order type and \emph{every} 2nd-order type which takes at least two 1st-order arguments. We modify the second of these proofs to show undecidability at $(\unitt \rightarrow \unitt) \rightarrow  \unitt \rightarrow \unitt$. 
Our proof of this easily adapts to a proof of the following.  

\begin{theorem}\label{thm:rhsundecidability}
Observational equivalence is undecidable at every 2nd-order type (of arity at least two) which contains a 1st-order argument that is not the final argument.
\end{theorem}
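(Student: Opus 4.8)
The plan is to reduce from an undecidable problem — the halting problem for Turing machines (equivalently, two‑counter Minsky machines). For a machine $T$ we construct closed $\frml$ terms $M_T^\theta, N_T^\theta$ of the given type $\theta$ with $M_T^\theta \cong N_T^\theta$ iff $T$ does not halt; since $M_T^\theta \cong N_T^\theta$ iff $\comp{\sem{M_T^\theta}} = \comp{\sem{N_T^\theta}}$ by full abstraction, this yields undecidability at $\theta$. We take as given the result for the minimal type $\tau := (\beta\rightarrow\beta)\rightarrow\beta\rightarrow\beta$, established just above by modifying the construction of \cite{Mur05} for $(\beta\rightarrow\beta)\rightarrow(\beta\rightarrow\beta)\rightarrow\beta$: so there are closed $M_T, N_T : \tau$ with $\comp{\sem{M_T}} = \comp{\sem{N_T}}$ iff $T$ diverges. (The power at $\tau$ comes from a first‑order argument $g$ that is shared across unboundedly many, possibly nested, re‑invocations of the result — itself a function of type $\beta\rightarrow\beta$ — each of which may call $g$ unboundedly often, together with the local ground store acting as a finite control; since in $\frml$ the integer store is finite, the growing machine configuration is carried by this interaction structure, not by the store.)

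The rest is a ``padding'' argument lifting undecidability from $\tau$ to an arbitrary $\theta = \sigma_1 \rightarrow \cdots \rightarrow \sigma_m \rightarrow \beta$ with $m \ge 2$ in which some $\sigma_j$, $j < m$, has order $1$. Write $\sigma_j$ as $\delta_1 \rightarrow \cdots \rightarrow \delta_r \rightarrow \beta_0$ with the $\delta_i$ ground and $r \ge 1$ (or $\sigma_j = \intreft$); fix a ground slot $k \le r$ and constants $c_i$; from $f : \sigma_j$ form the derived first‑order map $\tilde f := \lambda y.\, f\, c_1 \cdots c_{k-1}\, y\, c_{k+1} \cdots c_r$ (respectively $\tilde f := \lambda y.\,(\assg{f}{y};\ \deref f)$ when $\sigma_j = \intreft$); and fix a term $\mathsf{q}$ extracting one ground value from the trailing arguments $x_{j+1}, \dots, x_m$ (a bare variable, an application of one of them to constants, or a dereference). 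Then set
\[
M_T^\theta \;:=\; \lambda x_1 \cdots x_j.\; \letin{h = M_T\, \tilde f}{\ \lambda x_{j+1} \cdots x_m.\; h\, \mathsf{q}}
\]
and define $N_T^\theta$ from $N_T$ in the same way; all of $x_1, \dots, x_{j-1}$, and all but one of $x_{j+1}, \dots, x_m$, go unused, and $h : \beta\rightarrow\beta$ is evaluated once and shared across all re‑invocations of the residual $\sigma_{j+1} \rightarrow \cdots \rightarrow \sigma_m \rightarrow \beta$. A routine inspection of the game semantics shows that each complete play of $M_T^\theta$ is a complete play of $M_T$ with deterministically‑predictable moves inserted (the dummy arguments and their trivial closure‑responses, and the fixed constants passed to $f$), the residual re‑invocations and $g$‑calls of the $\tau$‑play corresponding respectively to the $h$‑calls and $\tilde f$‑calls here, and conversely; likewise for $N_T^\theta$ versus $N_T$. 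Hence $\comp{\sem{M_T^\theta}} = \comp{\sem{N_T^\theta}}$ iff $\comp{\sem{M_T}} = \comp{\sem{N_T}}$ iff $T$ diverges, and $\cong$ is undecidable at $\theta$.

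The main obstacle is the core construction at $\tau$ and its correctness — which is why the excerpt presents Theorem~\ref{thm:rhsundecidability} as an adaptation of the $(\unitt\rightarrow\unitt)\rightarrow\unitt\rightarrow\unitt$ case rather than a fresh proof: arranging the finite‑store control to detect and diverge on every deviation by $O$, and coding the growing configuration purely in the shape of the play, is where the work lies. Within the padding step the one delicate point is the ``inserting predictable moves'' claim: one must verify that $O$ cannot exploit the extra intermediate closures of the curried residual $\sigma_{j+1} \rightarrow \cdots \rightarrow \sigma_m \rightarrow \beta$, nor the spare slots $\delta_i$ ($i \neq k$) of $\sigma_j$, to produce a complete play of $M_T^\theta$ with no counterpart among those of $M_T$ — which holds because, after the predictable moves are removed, re‑calling such an intermediate closure collapses to another call of $h$ on a ground argument, and passing anything on the spare slots of $\sigma_j$ collapses to a plain call of $g$, both already available to $O$ in the $\tau$‑game.
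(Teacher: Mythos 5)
There is a genuine gap, and it lies in the one step you actually carry out: your padding term is scoped incorrectly. You define
$M_T^\theta = \lambda x_1\cdots x_j.\,\letin{h = M_T\,\tilde f}{\lambda x_{j+1}\cdots x_m.\,h\,\mathsf{q}}$,
so the closed term $M_T$ is re-evaluated from scratch every time O completes the application chain $x_1,\dots,x_j$. In call-by-value this re-allocates all of $M_T$'s internal references, and game-semantically it restarts $\sem{M_T}$ from its initial move: each instance of $M_T$ receives exactly \emph{one} application to a first-order argument, i.e.\ one $q_1$ in the prearena for $(\unitt\rightarrow\unitt)\rightarrow\unitt\rightarrow\unitt$. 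But the undecidability construction at that type --- the paper's simulation of a finite-state machine with a Q-store --- lives entirely in plays where O supplies the functional argument unboundedly many times against a \emph{single shared} instance of the term's global state: every enqueue step is a fresh $q_1$-block interacting with references allocated once, before $a_0$. Those multi-$q_1$ plays are precisely the plays that distinguish $M_T$ from $N_T$ when the machine halts, and they have no counterpart in $\comp{\sem{M_T^\theta}}$; concretely, each fresh instance of the paper's term can only perform its dummy initial step and then diverge. So the direction $M_T\not\cong N_T\Rightarrow M_T^\theta\not\cong N_T^\theta$ fails and the reduction collapses. The repair is to hoist the evaluation outside the leading abstractions, e.g.\ $\letin{F = M_T}{\lambda x_1\cdots x_j.\,\letin{h = F\,\tilde f}{\lambda x_{j+1}\cdots x_m.\,h\,\mathsf{q}}}$, so that repeated supplies of $x_j$ become repeated $q_1$'s against the shared $F$ --- which is in effect what the paper does by inserting dummy abstractions and constant applications directly into its term while keeping the global $\rmlterm{let}$-bound references at the outermost level.

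Separately, you take the base case at $(\beta\rightarrow\beta)\rightarrow\beta\rightarrow\beta$ entirely as given, whereas the paper's proof of this theorem consists almost wholly of that construction (a reduction from the halting problem for Q-store machines, using visibility to force queue discipline and a final sweep to certify it), with the generalisation to arbitrary types dispatched in a sentence. Dividing the labour that way is reasonable in principle, but the part you do supply is the part that breaks.
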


\paragraph{On the Left of the Turnstile.}
Note that $\seq{}{M \cong N : \theta}$ if, and only if, $\seq{f : \theta \rightarrow \unitt}{f M \cong f N : \unitt}$.  Thus, for any sequent $\seq{}{\theta}$ at which observational equivalence is undecidable, the sequent $\seq{\theta \rightarrow \unitt}{\unitt}$ is also undecidable. So the problem is undecidable if, on the left of the turnstile, we have a fourth-order type or a (third-order) type which has a second-order argument whose first-order argument is not the last.

\paragraph{Recursion.}
In IA, observational equivalence becomes undecidable if we add recursive first-order functions \cite{Ong04}. The analogous results for \rml\ with recursion also hold:
\begin{theorem}\label{thm:recursive-undecidability}
Observational equivalence is undecidable in $\rmlostr$ equipped with recursive functions $(\unitt \rightarrow \unitt) \rightarrow \unitt$
\end{theorem}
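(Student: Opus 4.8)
The plan is to encode the halting problem for a Turing-complete model --- most naturally, the emptiness problem for two-counter (Minsky) machines, or equivalently queue automata --- into an observational equivalence question over an $\rmlostr$ sequent that is allowed to use a recursive function of type $(\unitt \rightarrow \unitt) \rightarrow \unitt$. Since $\rmlostr$ itself (without recursion) has decidable observational equivalence, all the expressive power must come from the recursion operator; the point is that a recursive first-order-argument function is enough to maintain an unbounded store of ``stacked'' activation records, which is what lets us simulate a second counter (or a queue) on top of the one unbounded integer-store already available in $\rml$ via $\rmlterm{ref}$-cells. This mirrors the analogous IA result of Ong~\cite{Ong04}, and the strategy indicated in the paper's preamble: identify a term (or a pair of terms) whose induced complete plays are in bijection with the runs of the machine model, so that the existence of a halting run becomes the non-emptiness of a difference between two strategies.

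Concretely, I would first fix a two-counter machine $\mathcal{M}$ and write an $\rml$ term $M_{\mathcal M}$, living in a short $\rmlostr$-type context, that drives the simulation: a single $\intreft$ cell (or finitely many) holds the current control state and counter~$1$, while counter~$2$ is represented implicitly by the depth of nested calls to the recursive $(\unitt\rightarrow\unitt)\rightarrow\unitt$ function --- each ``increment $c_2$'' pushes a new recursive call, each ``decrement $c_2$'' returns from one, and ``test $c_2 = 0$'' is detectable by whether control is at the outermost call. The recursive function takes a $\unitt\rightarrow\unitt$ argument precisely so that the continuation to be run at the deeper level can be passed in, which is how the control flow threads back and forth through the stack of activations. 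I would then arrange that $M_{\mathcal M}$ terminates (reaches $\skipcom$, say, or diverges via $\Omega$) exactly according to whether $\mathcal M$ halts, and phrase the final statement as: $M_{\mathcal M} \cong \Omega$ (or $M_{\mathcal M}\cong\lambda\ldots\Omega$, at the appropriate short type) iff $\mathcal M$ does not halt. By the full abstraction theorem (Theorem of Abramsky--McCusker quoted in the excerpt) and the characterisation by complete plays, this is an observational-equivalence question, and it is undecidable.

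The two steps I expect to be delicate are: (i) checking that the term $M_{\mathcal M}$ genuinely stays within $\rmlostr$ --- i.e. that the type of $M_{\mathcal M}$ is short and every argument type of every free variable is short, even after we thread the recursive combinator and its $\unitt\rightarrow\unitt$ argument through the encoding --- because the whole content of the theorem is that it is the recursion, not any type-order increase, that causes undecidability; and (ii) proving faithfulness of the simulation at the level of game semantics, namely that every complete play of $\sem{M_{\mathcal M}}$ corresponds to a legal run of $\mathcal M$ and vice versa, with no ``spurious'' plays arising from the Opponent exploiting re-entrancy of the recursive calls in ways the counter machine cannot. Point (ii) is the main obstacle: one has to argue, using well-bracketing and visibility, that Opponent is forced to interact with the nested activations in a stack-disciplined (last-in-first-out) manner, so that the game-semantic ``stack'' faithfully tracks counter~$2$; this is exactly the kind of argument made for IA in~\cite{Ong04}, and the bulk of the appendix proof will be adapting it to the call-by-value, $\rmlostr$ setting. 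The remaining ingredients --- reducing from a known undecidable machine problem, and invoking full abstraction --- are routine.
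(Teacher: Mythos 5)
There is a genuine gap, and it lies exactly where your simulation gets its second source of unboundedness. In this paper $\intt$ denotes a \emph{finite} set of integers (this is the whole point of working with finitary $\rml$), so ``a single $\intreft$ cell (or finitely many) holds \ldots{} counter~1'' does not work: a reference cell can only range over a bounded set of values. That leaves your construction with only one unbounded resource, namely the call stack encoding counter~2, i.e.\ a pushdown system over a finite store --- and that is decidable. Indeed, the paper's own summary table records that O-strict $\rml$ with \emph{first-order} recursion ($\beta\rightarrow\beta$) remains decidable. This also exposes the second problem with your plan: you use the $\unitt\rightarrow\unitt$ argument only as ``the continuation to be run at the deeper level'', which is precisely what plain $\unitt\rightarrow\unitt$ recursion already provides. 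The functional argument must do something strictly stronger, otherwise the theorem would contradict the decidability of the first-order-recursion case.

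The paper's proof supplies that extra power by simulating a \emph{Queue program} (finite control plus an unbounded queue, undecidable by Post-style results) rather than a two-counter machine. Each \texttt{enqueue} makes a recursive call whose activation record holds a local cell $\mathit{cur}$ containing the enqueued symbol; the crucial point is that the $\unitt\rightarrow\unitt$ argument passed at each call is a closure which, when invoked, reaches back into the \emph{caller's} activation record (and, iterated, arbitrarily far down the stack) to locate and zero out the oldest not-yet-dequeued $\mathit{cur}$. This non-LIFO access to the stack of activation records is what exceeds pushdown power, and it genuinely needs the first-order argument type. Finally, the reduction lands on a \emph{closed} term of ground type ($M\cong()$ iff the program halts), so equivalence degenerates to termination and there is no delicate game-semantic argument about Opponent exploiting re-entrancy of the kind you anticipate in step (ii); the work is all in the operational correctness of the stack-walking encoding. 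To repair your proposal you would either have to switch to a queue (or some other single unbounded structure accessible FIFO-fashion via the functional argument) or find another legitimate source for the second counter; as written, the argument establishes only a decidable amount of expressive power.
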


\section{Conclusion}

We have used two related encodings of pointers to data values to decide two related fragments of $\forml$: $\rmlpstr$, in which the free variables were limited to arity 1, and $\rforml$, in which the free variables were unlimited in arity but each argument of the free variable was limited to arity 1.  
It is natural to ask whether we can extend or combine these approaches to decide the whole of $\forml$.  
Here we discuss why this seems likely to be impossible with the current machinery used.

In deciding $\rmlpstr$ we used the nested data value tree-structure to mirror the shape of the prearenas.  These data values can be seen as names for different threads, with the sub-thread relation captured by the nested structure.  
What happens if we attempt to use this approach to recognise strategies on types where the free variables have arity greater than 1?  With free variables having arity 1, whenever they are interrogated by P, they are entirely evaluated immediately: they cannot be partially evaluated.  With arity greater than 1, this partial evaluation can happen: P may provide the first argument at some stage, and then at later points evaluate the variable possibly several times with different second arguments.  P will only do this subject to visibility conditions though: if P partially evaluates a variable $x$ while in a thread $T$, it can only continue that partial evaluation of $x$ in $T$ or a sub-thread of $T$.  
This leads to problems when our automata recognise interleavings of similar threads using the same part of the automaton.  If P's strategy for the thread $T$ is the strategy $\sem{M}$ for a term $M$, and recognised by an automaton $\calA^M$, then $\sem{\lambda y . M}$ will consist of interleavings of $\sem{M}$. 
The automaton $\calA^{\lambda y . M}$ will use a copy of $\calA^M$ to simulate an unbounded number of $M$-threads.  
If $T$ is one such thread, which performs a partial evaluation of $x$, this partial evaluation will be represented by input letters with data values unrelated to the data value of $T$.  
If a sibling of $T$, $T'$, does the same, the internal state of the automaton will have no way of telling which of these partial evaluations was performed by $T$ and which by $T'$.  
Hence it may recognise data words which represent plays that break the visibility condition.

Therefore, to recognise strategies for terms with free variables of arity greater than 1, 
the natural approach to take is to have the data value of free-variable moves be related to the thread we are in.  
This is the approach we took in deciding $\rforml$: the free variable moves precisely took the data value of the part of the thread they were in.  
Then information about the partial evaluation was stored by the thread's data value.  
This worked when the arguments to the free variables had arity at most 1: however if we allow the arity of this to increase we need to start representing O-pointers in the evaluation of these arguments.  
For this to be done in a way that makes an inductive construction work for $\letin{x = (\lambda y . M)}{N}$, we must use some kind of nesting of data values for the different $M$-threads.  The na\"{i}ve approach to take is to allow the $M$-thread data values to be nested under the data value of whatever part of the $N$-thread they are in.  However, the $M$-thread may be started and partially evaluated in one part of the $N$-thread, and then picked up and continued in a descendant part of that $N$-thread. The data values used in continuing the $M$-thread must therefore be related to the data values used to represent the partial evaluation of the $M$-thread, but also to the part of the $N$-thread the play is currently in.  This would break the tree-structure of the data values, and so seem to require a richer structure on the data values.

\paragraph{Further Work.}
A natural direction for further work, therefore, is to investigate richer data structures and automata models over them that may provide a way to decide $\forml$.  

The automata we used have a non-primitive recursive emptiness problem, and hence the resulting algorithms both have non-primitive recursive complexity also.  Although work in \cite{Hopkins12} shows that this is not the best possible result in the simplest cases, the exact complexities of the observational equivalence problems are still unknown.

To complete the classification of $\frml$ also requires deciding (or showing undecidable) the fragment containing order 2 types (on the RHS) with one order 1 argument, which is the last argument.  
A first step to deciding this would be the fragment labelled $\hbox{RML}_X$ in figure \ref{tab:SummaryOfDecidabilityResultsForRML}.  
Deciding this fragment via automata reductions similar to those in this paper would seem to require both data values to represent O-pointers, and some kind of visible stack to nest copies of the body of the function, as used in \cite{HMO11}.
In particular, recognising strategies of second-order terms such as $\lambda f. f()$ requires the ability to recognise data languages (roughly) of the form $\{ d_1 d_2 ... d_n d_n ... d_2 d_1 \: | \: n \in N,$ each $d_i$ is distinct$\}$.  A simple pumping argument shows such languages cannot be recognised by nested data class memory automata, and so some kind of additional stack would seem to be required.

\bibliographystyle{plain}
\bibliography{consolidatedbib}

\begin{thebibliography}{10}

\bibitem{AM96}
Samson Abramsky and Guy McCusker.
\newblock Linearity, sharing and state: a fully abstract game semantics for
  idealized algol with active expressions.
\newblock {\em Electr. Notes Theor. Comput. Sci.}, 3:2--14, 1996.

\bibitem{AM97b}
Samson Abramsky and Guy McCusker.
\newblock Call-by-value games.
\newblock In Mogens Nielsen and Wolfgang Thomas, editors, {\em {CSL} 1997},
  volume 1414 of {\em Lecture Notes in Computer Science}, pages 1--17.
  Springer, 1997.

\bibitem{BjorklundS10}
Henrik Bj{\"{o}}rklund and Thomas Schwentick.
\newblock On notions of regularity for data languages.
\newblock {\em Theor. Comput. Sci.}, 411(4-5):702--715, 2010.

\bibitem{Cotton-BarrattMO14}
Conrad Cotton{-}Barratt, Andrzej~S. Murawski, and C.{-}H.~Luke Ong.
\newblock Weak and nested class memory automata.
\newblock {\em Proceedings of LATA 2015}, to appear, 2015.

\bibitem{Decker14}
Normann Decker, Peter Habermehl, Martin Leucker, and Daniel Thoma.
\newblock Ordered navigation on multi-attributed data words.
\newblock In Paolo Baldan and Daniele Gorla, editors, {\em {CONCUR} 2014},
  volume 8704 of {\em Lecture Notes in Computer Science}, pages 497--511.
  Springer, 2014.

\bibitem{GhicaM03}
Dan~R. Ghica and Guy McCusker.
\newblock The regular-language semantics of second-order idealized algol.
\newblock {\em Theor. Comput. Sci.}, 309(1-3):469--502, 2003.

\bibitem{HY99}
Kohei Honda and Nobuko Yoshida.
\newblock Game-theoretic analysis of call-by-value computation.
\newblock {\em Theor. Comput. Sci.}, 221(1-2):393--456, 1999.

\bibitem{Hopkins12}
David Hopkins.
\newblock {\em Game Semantics Based Equivalence Checking of Higher-Order
  Programs}.
\newblock PhD thesis, Department of Computer Science, University of Oxford,
  2012.

\bibitem{HMO11}
David Hopkins, Andrzej~S. Murawski, and C.{-}H.~Luke Ong.
\newblock A fragment of {ML} decidable by visibly pushdown automata.
\newblock In Luca Aceto, Monika Henzinger, and Jiri Sgall, editors, {\em
  {ICALP} 2011}, volume 6756 of {\em Lecture Notes in Computer Science}, pages
  149--161. Springer, 2011.

\bibitem{HylandO00}
J.~M.~E. Hyland and C.{-}H.~Luke Ong.
\newblock On full abstraction for {PCF:} i, ii, and {III}.
\newblock {\em Inf. Comput.}, 163(2):285--408, 2000.

\bibitem{JM78}
N.~D. Jones and S.~S. Muchnick.
\newblock The complexity of finite memory programs with recursion.
\newblock {\em J. ACM}, 25(2), 1978.

\bibitem{Min67}
M.~L. Minsky.
\newblock {\em Computation: Finite and Infinite Machines}.
\newblock Prentice-Hall, Inc., 1967.

\bibitem{Mur03}
Andrzej~S. Murawski.
\newblock On program equivalence in languages with ground-type references.
\newblock In {\em 18th {IEEE} Symposium on Logic in Computer Science {(LICS}
  2003)}, page 108. {IEEE} Computer Society, 2003.

\bibitem{Mur05}
Andrzej~S. Murawski.
\newblock Functions with local state: Regularity and undecidability.
\newblock {\em Theor. Comput. Sci.}, 338(1-3):315--349, 2005.

\bibitem{MOW05}
Andrzej~S. Murawski, C.{-}H.~Luke Ong, and Igor Walukiewicz.
\newblock Idealized algol with ground recursion, and {DPDA} equivalence.
\newblock In Lu{\'{\i}}s Caires, Giuseppe~F. Italiano, Lu{\'{\i}}s Monteiro,
  Catuscia Palamidessi, and Moti Yung, editors, {\em {ICALP} 2005}, volume 3580
  of {\em Lecture Notes in Computer Science}, pages 917--929. Springer, 2005.

\bibitem{MT11}
Andrzej~S. Murawski and Nikos Tzevelekos.
\newblock Algorithmic nominal game semantics.
\newblock In Gilles Barthe, editor, {\em 20th European Symposium on
  Programming, {ESOP} 2011}, volume 6602 of {\em Lecture Notes in Computer
  Science}, pages 419--438. Springer, 2011.

\bibitem{MurawskiT12}
Andrzej~S. Murawski and Nikos Tzevelekos.
\newblock Algorithmic games for full ground references.
\newblock In Artur Czumaj, Kurt Mehlhorn, Andrew~M. Pitts, and Roger
  Wattenhofer, editors, {\em {ICALP} 2012}, volume 7392 of {\em Lecture Notes
  in Computer Science}, pages 312--324. Springer, 2012.

\bibitem{Ong04}
C.{-}H.~Luke Ong.
\newblock An approach to deciding the observational equivalence of algol-like
  languages.
\newblock {\em Ann. Pure Appl. Logic}, 130(1-3):125--171, 2004.

\bibitem{PS98}
Andrew~M. Pitts and Ian D.~B. Stark.
\newblock Operational reasoning for functions with local state.
\newblock {\em Higher order operational techniques in semantics}, pages
  227--273, 1998.

\bibitem{Pos43}
E.~L. Post.
\newblock Formal reductions of the general combinatorial decision problem.
\newblock {\em AJM}, 65(2), 1943.

\end{thebibliography}

\appendix
\section{Proof of Theorem \ref{thm:pstrict-proof}}\label{appendix:pstrict}
Given a $\rmlpstr$ term-in-context $\seq{\Gamma}{M}$ we construct a Deterministic Weak NDCMA $\calA_{\seq{\Gamma}{M}}$ recognising, as a language, $\comp{\sem{\seq{\Gamma}{M}}}$.  By the full abstraction theorem, observational equivalence can then be checked by testing the corresponding automata for equivalence.

For notational convenience, in this appendix we write $\twovec{a}{d}$ for the letter $(a,d) \in \D$.
\begin{wrapfigure}{r}{0.45\textwidth}
\[
\begin{tikzpicture}[arena]		
		\node[clear](q0){$q_0$};
		
		\node[clear](vdots)[below =of q0] {$\vdots$} edge(q0);
		\node[clear](hdots1)[left =of vdots] {$\dots$};

		\node[clear](a0)[right = 4em of vdots]{$a_0$} edge(q0);
		\node[clear](q1)[below =of a0]{$q_1$} edge(a0);
		\node[clear](a1)[below =of q1]{$a_1$} edge(q1);
		\node[clear](vdots1)[below =of a1]{$\vdots$} edge(a1);
		\node[clear](qn)[below =of vdots1]{$q_n$} edge(vdots1);
		\node[clear](an)[below =of qn]{$a_n$} edge(qn);

		\node[clear](qt1)[left =of hdots1]{$q'$} edge(q0);
		\node[clear](q0t1)[below left=of qt1] {$q_0'$} edge(qt1);
		\node[clear](a0t1)[below =of q0t1] {$a_0'$} edge(q0t1);
		\node[clear](dotst1)[below =of a0t1] {$\vdots$} edge(a0t1);
		\node[clear](qmt1)[below =of dotst1] {$q_{m}'$} edge(dotst1);
		\node[clear](amt1)[below =of qmt1] {$a_{m}'$} edge(qmt1);
		\node[clear](at1)[below =of qt1] {$a'$} edge(qt1);


		\end{tikzpicture}
		\]
	\caption{Shape of prearenas for $\rmlpstr$}
	\label{fig:prearena-pstrict}
\end{wrapfigure}
The shape of the pre-arena for terms $\sem{\seq{\Gamma}{M}}$ in $\rmlpstr$ is shown in figure \ref{fig:prearena-pstrict}.  The moves in on the right of the prearena correspond to $M$, while moves on the left correspond to $\Gamma$.

For type sequents $\seq{\Gamma}{\theta}$ in $\rmlpstr$, a play $p$ in $\sem{\seq{\Gamma}{\theta}}$ is represented in the data language as a word $w$ where string projection of $w$ is equal to the underlying sequence of moves in $p$.  Pointers are only ambiguous for question moves in sections $A$ and $C$ of the arena.  Pointers for questions are represented in the following manner:
\begin{itemize}
\item The initial question takes a (fresh) level-0 data value.
\item If $a$ is an answer-move in the play, then the corresponding letter in the word will be $\twovec{a}{d}$ where $d$ is the same data value as the answer's justifier.
\item Question moves in sections $A$, $B$, and $C$ of the arena take a fresh data value $d$, such that $pred(d)$ is the data value of the justifying move.
\end{itemize}
Essentially: question moves take a data value whose predecessor is the data value of the justifying move, answer moves take the data value of the question they answer.

We note that this has the following (convenient) consequence: each data class of a word in such a language is either empty or of the form $\twovec{q}{d} \twovec{a}{d}$.  It cannot be longer than this.

\textbf{Reduction from $\rmlpstr$.} \quad
The reduction is inductive on the construction of the canonical form.  We make the construction indexed by initial moves, with each automaton $\calA_i$ recognising the appropriate language restricted to the initial move $i$.  The construction to combine these into one automaton as per the specification above is a  straightforward union of the automata and merging of the initial states.

Our inductive hypothesis is slightly stronger than that the constructed automata recognises the appropriate languages.  We also require the following conditions on the automaton $\calA^M_i$:
\begin{itemize}
\item Initial states are never revisited (or have data values assigned to them)
\item The automaton is deterministic
\item Each state can only ever ``hold" data values of one, fixed, level.
\item There is precisely one transition from the initial state, labelled $i, (0,\bot)$.  We will call the target state of this transition the ``secondary state" of the automaton.  Further, this is the only transition in the automaton with signature $(0,\bot)$.
\item If $q$ and $q'$ are (non-initial) final states in the automaton, then if there is a transition $(q,a,\xi,p,\xi')$ then $(q',a,\xi,p,\xi')$ is also a transition.
\end{itemize} 

\textbf{Notation describing NDCMA.} \quad
In the following, I represent transitions of NDCMA in a couple of ways.  The most standard notation I use is to write something of the form $p \xrightarrow{m, (k, \bar{p})} q, \bar{q}$.  Here we have $m \in \Sigma$, $p,q \in Q$, and $\bar{p}, \bar{q} \in (Q_\bot)^k$.  This means that $(q, \bar{q}) \in \delta_k ( p, m, (k, \bar{p}))$.  

I may write $\twovec{s}{\bar{s}}$ for the $k+1$-vector of elements of $Q_\bot$ obtained by putting $s$ ``on top" of the $k$-vector $\bar{s}$.  Similarly $\twovec{\bar{s}}{s}$ puts $s$ ``below" $\bar{s}$. 

Sometimes I omit the final $\bar{q}$: in this case it is implicitly assumed to only update the currently-read data value, which is updated to $q$.  Formally: this means $\bar{q} = \bar{p}[q / p_k]$.

When I am omitting this final $\bar{q}$, it is also possible to draw the automata in a relatively standard manner (e.g. in the first couple of cases below).

\subsection{$(): \unitt$}
For $\sem{\seq{\Gamma}{(): \unitt}}$ the complete plays of the strategy are of the form $\Pstr{(gamma) {\gamma} \ (bullet-gamma) {\bullet}}$ (or the empty play).  Hence $\calA_i$ is simply:

\begin{tikzpicture}[->,>=stealth',shorten >=1pt,auto,node distance=2.5cm,
                    semithick]

  \node[state,initial,accepting] (1) {$s_1$};  
  \node[state] (2) [right of=1] {$s_2$};
  \node[accepting,state] (3) [right of=2] {$s_3$};
  \path (1) edge              node {$\gamma, (0,\bot)$} 		(2)
   (2) edge              node {$\bullet, (0,s_2)$} 		(3);
\end{tikzpicture}

\subsection{$i: \intt$}
This is also straightforward, identical to the last case but with a differently labelled move:

\begin{tikzpicture}[->,>=stealth',shorten >=1pt,auto,node distance=2.5cm,
                    semithick]

  \node[state,initial,accepting] (1) {$s_1$};  
  \node[state] (2) [right of=1] {$s_2$};
  \node[accepting,state] (3) [right of=2] {$s_3$};
  \path (1) edge              node {$\gamma, (0,\bot)$} 		(2)
   (2) edge              node {$i, (0,s_2)$} 		(3);
\end{tikzpicture}

\subsection{$x^\beta : \beta$}
Here we have $\seq{\Gamma}{x^\beta}$, so $x : \beta$ is in $\Gamma$.  Thus the initial moves have an $x$-component, so an initial move is of the form $(\bar{\gamma}, j)$ where $j$ is in the $x$-component.  For such an initial move, the plays recognised are just $\set{ \twovec{(\bar{\gamma},j)}{d} \twovec{j}{d} \: : \: d \in \dataset \text{ is level-0}}$, and again the appropriate automaton is straightforwardly given:

\begin{tikzpicture}[->,>=stealth',shorten >=1pt,auto,node distance=3cm,
                    semithick]

  \node[state,initial,accepting] (1) {$s_1$};  
  \node[state] (2) [right of=1] {$s_2$};
  \node[accepting,state] (3) [right of=2] {$s_3$};
  \path (1) edge              node {$(\bar{\gamma},j) , (0,\bot)$} 		(2)
   (2) edge              node {$j, (0,s_2)$} 		(3);
\end{tikzpicture}

\subsection{$\suc{x^\intt} : \intt$ and $\pre{x^\intt} : \intt$}
These are just as in the previous case, but adding or subtracting one to the $j$ (modulo the fragment of $\intnum$ being used).

\subsection{$x^{\intreft} := y^\intt : \unitt$}
Here we have $\seq{\Gamma}{x^{\intreft} := y^\intt}$, so $x : \intreft$ and $y: \intt$ are in $\Gamma$.  Thus the initial moves have a $y$-component, say $j$.  Thus the language recognised by $\calA_{(\bar{\gamma},j)}$ is just:
$$\set{ \twovec{(\bar{\gamma},j)}{d} \twovec{write_x (j)}{d'} \twovec{ok_x}{d'} \twovec{\bullet}{d} \: | \: d \text{ is level-0} \text{ and } pred(d') = d}$$
This is recognised by the following automaton:

\begin{tikzpicture}[->,>=stealth',shorten >=1pt,auto,node distance=3cm,
                    semithick,]
                   
  \node[state,initial,accepting] (1) {$s_1$};  
  \node[state] (2) [right of=1] {$s_2$};
  \node[state] (3) [right of=2] {$s_3$};
  \node[state] (4) [right of=3] {$s_4$};
  \node[accepting,state] (5) [right of=4] {$s_5$};
  \path (1) edge              node {\small $(\bar{\gamma},j) , (0,\bot)$} 		(2)
   (2) edge              node {\small $w_x(j), (1, \twovec{s_2}{\bot})$} 		(3)
   (3) edge              node {\small $ok_x, (1,\twovec{s_2}{s_3})$} 		(4)
   (4) edge              node {\small $\bullet, (0,s_2)$} 		(5);
\end{tikzpicture}

\subsection{$\deref{x^\intreft} : \intt$}
This is similar to the previous case, except that the value for P to return is given by O's reponse to $read_x$.  The desired language for $\calA_\gamma$ is is just 
$$\set{ \twovec{\gamma}{d} \twovec{read_x}{d'} \twovec{j_x}{d'} \twovec{j}{d} \: | \: j \in \natnum, d \text{ is level-0} \text{ and } pred(d') = d}$$
This is recognised by a very similar automaton to the previous case, except that from state $s_3$ the automaton splits into different states for each possible answer $j_x$.

\subsection{$\cond{x^\beta}{M}{N} : \theta$}
The initial move contains an $x$-component.  If this $x$-component is $0$ then the automaton is as the as the automaton for $N$, otherwise it is as the automaton for $M$.

\subsection{$\makevar{\lambda x^\unitt . M}{\lambda y^\intt . N} : \intreft$}
This construction is very similar to that provided in the $\rforml$ case, in appendix \ref{proof:1rforml-mkvar}.  The only difference is that now the automata for $M$ and $N$ needn't be level-0, as they may make plays in $\Gamma$.  These parts of the automata must retain the data-levels used, but nested under the initial level-0 data value used: this is a simple construction.  

\subsection{$\while{M}{N}$}
The strategy $\sem{\while{M}{N}}$ plays as if playing $M$ until the final move would be made.  If this would be 0, P gives the $\bullet$ answer to the initial move, and stops.  Otherwise it plays as if playing $N$, until the final move would be made, when it starts as if playing $M$ again.  This is easy to construct from the automata $\calA_\gamma^M$ and $\calA_\gamma^N$: for a full formal description of how, see appendix \ref{proof:1rforml-while} which deals with the $\rforml$ case.  The construction here is very similar: the only difference is that the constituent automata may no longer be level-0.  This could lead to difficulties if data values spawned in one run-throughs of the loop could be used in a later run-through, but this cannot happen as arguments cannot be partially evaluated in this fragment, so cannot be returned to later.

\subsection{$\letin{x = \newc}{M} : \theta$}
We assume we have a family of automata, $\calA^M_i$, recognising the strategy $\sem{\seq{\Gamma, x: \intreft}{M: \theta}}$. 
$\sem{\seq{\Gamma}{\letin{x = \newc}{M} : \theta}}$ is constructed by restricting behaviour of $x$ to ``good variable" behaviour (i.e. after a read-move the response is an immediate reply of the last integer written to the variable), and then hiding those moves.  The automata construction is done in these two stages.

\textbf{Restriction to good-variable behaviour.} \quad 
The value of the variable will be stored in both the current state of the automaton, and by the level-0 data value.  The level-0 data value will be used to ensure that when O switches between threads (see the $\lambda$-abstraction construction in section \ref{proof:rmlpstr-lambda}), the correct variable value is retained.  By keeping the value in the current state, the correct value is retained when moves in $\Gamma$ are being made.
Assume the finitary fragment we are using is $\set{0,1, \dots, K}$.  
Let $Q$ be the non-initial states of $\calA^M_\gamma$.
We construct $\calC_\gamma$ as follows:
\begin{itemize}
\item The states of the automaton are $\set{q_I} \uplus (Q \times \set{0,1, \dots, K})$
\item The final states are $q_I$ and those which are final in $\calA^M_\gamma$ paired with any integer.
\item The initial state is $q_I$
\item The transitions are given as follows:
\begin{itemize}
	\item $q_I \xrightarrow{q_0, (0,\bot)} (s_M,0)$ where $s_M$ is the secondary state of $\calA^M_\gamma$.
	\item if $q_1 \xrightarrow{m, (k, \threevec{s_0}{\vdots}{s_k})} q_2, \threevec{t_0}{\vdots}{t_k}$ is in $\calA^M_\gamma$	where $m$ is not an $x$-$write$ move or a response to an $x$-$read$ move, then:
	\begin{itemize}
		\item if $m$ is a $q_j$ move for some $j \neq 0$, for each $i_0, i_1, \dots, i_k \in \set{0,1, \dots, K}$, we have the transition $(q_1,i_0) \xrightarrow{m, (k, \threevec{(s_0,i_0)}{\vdots}{(s_k,i_k)})} (q_2,i_0), \threevec{(t_0,i_0)}{\vdots}{(t_k,i_0)}$ (for convenience, if $s_k = \bot$ we interpret $(s_k,i)$ as $\bot$ also).
		\item if $m$ is not a $q_j$ move, for each $i, j_0, j_1, \dots, j_k \in \set{0,1,\dots,K}$, we have the transition $(q_1,i) \xrightarrow{m, (k, \threevec{(s_0,j_0)}{\vdots}{(s_k,j_k)})} (q_2,i), \threevec{(t_0,i)}{\vdots}{(t_k,i)}$.
	\end{itemize}
	
	\item For each $j$, if $q_1 \xrightarrow{write_x(j), (1, \twovec{s_0}{\bot})} q_2, \twovec{t_0}{t_1}$ is in $\calA^M_\gamma$, then we have the transition $(q_1,i_1) \xrightarrow{write_x(j), (1, \twovec{(s_0,i_0)}{\bot})} (q_2,j), \twovec{(t_0,j)}{(t_1,j)}$ for each $i_1, i_0$.
	
	\item For each response to an $x$-read move, $j_x$, if $q_1 \xrightarrow{j_x, (1, \twovec{s_0}{s_1})} q_2, \twovec{t_0}{t_1}$ is in $\calA^M_\gamma$, then we have $(q_1,j) \xrightarrow{j_x, (1, \twovec{(s_0,i_0)}{(s_1,i_1)})} (q_2,j), \twovec{(t_0,j)}{(t_1,j)}$ for each $i_0, i_1$.
\end{itemize}
\end{itemize}
\noindent Note that adding the transitions between accepting states as required by the inductive hypothesis will not change the language recognised, since all the outward transitions added would be labelled with a $q_j$ move, and by construction these moves require a level-0 data value to be in the correct place.

\textbf{Hiding} \quad 
$\calA^{\letin{x = \newc}{M} }_\gamma$ is constructed from $\calC_\gamma$ as follows:

If we are in a configuration $(s_1, f)$ of $\calC_i$ where we can perform a transition $s_1 \xrightarrow{m_x, (j,\bar{s})} s_2, \bar{t}$ where $m_x$ is an $x$-move then by determinacy of strategies combined with the restriction to good variable behaviour, it is the only possible transition from this configuration.  
Thus for every state $s_0$ of $\calC_\gamma$ and every possible ``signature" $\twovec{t_0}{t_0'}$, there is a unique maximal (and not necessarily finite) sequence of transitions:
$$ s_0 \xrightarrow{m_0, (1, \twovec{t_0}{\bot})} s_1, \twovec{t_1}{t_1'} \xrightarrow{m_1, (1, \twovec{t_1}{t_1'})} s_2, \twovec{t_2}{t_2'} \xrightarrow{m_2, (1, \twovec{t_2}{\bot})} \dots $$
or
$$ s_0 \xrightarrow{m_0, (1, \twovec{t_0}{t_0'})} s_1, \twovec{t_1}{t_1'} \xrightarrow{m_1, (1, \twovec{t_1}{\bot})} s_2, \twovec{t_2}{t_2'} \xrightarrow{m_2, (1, \twovec{t_2}{t_2'})} \dots $$
where each $m_i$ is an $x$-move.

From each $\calC_\gamma$ we construct the automaton $\calA^{\letin{x = \newc}{M}}_\gamma$ by considering where this sequence terminates for each state.  Everything is the same as in $\calC_\gamma$ except for the transition relation, which is altered as follows:
\begin{itemize}
\item If the maximal sequence of $x$-moves with signature $\twovec{t_0}{t_0'}$ out of state $s_0$ is empty then all transitions requiring signature $\twovec{t_0}{t_0'}$ out of $s_0$ are unchanged.
\item If the maximal sequence out of $s_0$ with signature $\twovec{t_0}{t_0'}$ is finite and non-empty and ends in state $s_n$ and with signature $\twovec{t_n}{t_n'}$, then for every transition $s_n \xrightarrow{m,(1, \twovec{t_n}{t_n'})} s_{n+1} , \bar{t}$ we add the transition $s_0 \xrightarrow{\epsilon} s_{n+1}$ (note that by determinacy of the strategy and restriction to good variable behaviour, this $\epsilon$ transition can be compressed out without loss of determinacy).
\item All transitions on $x$-moves are removed
\item Transitions from final states as required by the inductive hypothesis are added.  This does not affect the language recognised, since the added transitions will require a level-0 data value to be ``in" the relevant copy of $Q_0$, and there can only be one level-0 data value in runs of this automaton.
\end{itemize}

Determinacy of the resulting automaton is inherited from determinism of $\calC_\gamma$ (and thence from $\calA^M_\gamma)$.

\subsection{$\lambda x^\beta . M : \beta \rightarrow \theta$}\label{proof:rmlpstr-lambda}
We have $\seq{\Gamma, x: \beta}{M : \theta}$, and therefore assume there is a family of automata $\calA^M_i$ recognising $\sem{M}$.  
The prearenas for $\sem{\seq{\Gamma, x}{M}}$ and $\sem{\seq{\Gamma}{\lambda x.M}}$ are shown in figure \ref{fig:prearenas-lambda}.  Note that the initial moves in $\sem{\seq{\Gamma, x}{M}}$ contain an $x$-component, so may be considered pairs $({\gamma}, i_x)$, while the initial moves in  $\sem{\seq{\Gamma}{\lambda x.M}}$ contain the same $\Gamma$-component, but no $x$-component.  The move $q_0$ therefore corresponds to the $\Gamma$-component, and the move $q_1$ precisely corresponds to an $x$-move.

$\sem{\seq{\Gamma}{\lambda x.M}}$ is as follows: after an initial move ${\gamma}$, P plays the unique $a_0$-move $\bullet$, and waits for a $q_1$-move.  Once O plays a $q_1$-move $i_x$, P plays as in $\sem{\seq{\Gamma,x}{M}}$ when given an initial move $({\gamma}, i_x)$.
However, as the $q_1$-moves are not initial, it is possible that O will play another $q_1$-move, $i_x'$.  Each time O does this it opens a new thread which P plays as per $\sem{\seq{\Gamma,x}{M}}$ when given initial move $({\gamma}, i_x')$.  Only O may switch between threads, and this can only happen immediately after P plays an $a_i$-move (for any $i$).  
Hence we construct $\calA^{\lambda x.M}_{{\gamma}}$ as follows:
\begin{itemize}
\item The set of states is the disjoint union of the set of non-initial states of each $\calA^M_{(\bar{\gamma},i_x)}$, plus new states $(1)$, $(2)$, and $(3)$.
\item The initial state is $(1)$
\item The final states are those that are final in each $\calA^M_{(\bar{\gamma},i_x)}$, as well as $(1)$ and $(3)$.
\item The transition relation is as follows:
\begin{itemize}
	\item $(1) \xrightarrow{\gamma, (0,\bot)} (2)$
	\item $(2) \xrightarrow{a_0, (0,(2))} (3)$
	\item For each $i_x$, $(3) \xrightarrow{i_x, (1,\twovec{(3)}{\bot})} s_{i_x}$ where $s_{i_x}$ is the secondary state of $\calA^M_{\bar{\gamma},i_x}$
	\item If $s_1 \xrightarrow{m, (j,\bar{s})} s_2, \bar{t}$ is a (non-initial) transition in one of the $\calA^M_{i_x}$, then:
	\begin{itemize}
		\item if $m$ is a $q_i$ or $a_i$ move, $s_1 \xrightarrow{m, (j+1,\twovec{(3)}{\bar{s}})} s_2, \twovec{(3)}{\bar{t}}$ is a transition.
		\item if $m$ is a move in $\sem{\Gamma}$,  $s_1 \xrightarrow{m, (j,{\bar{s}[(3) / s_0]})} s_2, {\bar{t}}[(3) / t_0]$ is a transition.
	\end{itemize}	 
	\item If $s_1$ and $s_1'$ are both (non-initial) final states and $s_1 \xrightarrow{m, (j,\bar{s})} s_2, \bar{t}$ is a transition already given by the above rules, then $s_1' \xrightarrow{m, (j,\bar{s})} s_2, \bar{t}$. (This allows O to switch between threads).
\end{itemize}
\end{itemize}

\subsection{$\letin{x^\beta = M}{N} : \theta $}
Here we have $\seq{\Gamma}{M : \beta}$ and $\seq{\Gamma, x : \beta}{N : \theta}$.  The initial moves of $\sem{\seq{\Gamma, x : \beta}{N : \theta}}$ contain an $x$-component, so we index the family of automata recognising $\sem{\seq{\Gamma, x : \beta}{N : \theta}}$ as $\calA^N_{\gamma,j}$ where $j$ is the $x$-component.  The family of automata recognising $\sem{M}$ are indexed as $\calA^M_\gamma$. 

The strategy $\sem{\letin{x^\beta = M}{N}}$ is essentially a concatenation of the strategies for $\sem{M}$ and $\sem{N}$, with the result of the $\sem{M}$ strategy determining the $x$-component of the initial move of $\sem{N}$.   $\calA^{\letin{x^\beta = M}{N}}_i$ is constructed as follows:

If $\mathcal{L}(\calA^M_i) = \set{\twovec{\gamma}{d} \twovec{j}{d}}$ then $\calA^{\letin{x^\beta = M}{N}}_\gamma = \calA^N_{\gamma,j}$.  Otherwise, by determinacy of the strategy there cannot be a transition from the secondary state to a final state in $\calA^M_\gamma$, and $\calA^{\letin{x^\beta = M}{N}}_\gamma$ is given by:
\begin{itemize}
\item The set states of states is disjoint union of the non-initial states of $\calA^M_i$ and each $\calA^N_{i,j}$, plus new a state $(1)$.
\item The initial state is $(1)$.
\item The final states are those which are final in each $\calA^N_{\gamma,j}$.
\item The transitions are given as follows:
\begin{itemize}
	\item $(1) \xrightarrow{\gamma, (0, \bot)} s_M$ where $s_M$ is the secondary state of $\calA^M_\gamma$
	\item All transitions in $\calA^M_\gamma$ not going to a final state (or from the initial state) are preserved
	\item If $s_1 \xrightarrow{j,(0,s_3)} s_2, t$ is a transition in $\calA^M_\gamma$ with $s_2$ final (in $\calA^M_\gamma$) and $s_{N,j}$ is the secondary state of $\calA^N_{\gamma,j}$:
	\begin{itemize}
		\item if $s_{N,j} \xrightarrow{m,(0,s_{N,j})} s_4, t'$ is in $\calA^N_{\gamma,j}$ then we have the transition $s_1 \xrightarrow{m,(0,s_3)} s_4, t'$.
		\item if $s_{N,j} \xrightarrow{m,(1,\twovec{s_{N,j}}{\bot})} s_4, \bar{t'}$ is in $\calA^N_{\gamma,j}$ then we have the transition $s_1 \xrightarrow{m,(1,\twovec{s_3}{\bot})} s_4, \bar{t'}$.
	\end{itemize}	  
	\item All other transition in each $\calA^N_{\gamma,j}$ are preserved unchanged.
	\item Transitions from final states as required by the inductive hypothesis are added.  This does not affect the language recognised, since the added transitions will require a level-0 data value to be ``in" the relevant copy of $\calA^N_{\gamma,j}$, and there can only be one level-0 data value in runs of this automaton.
\end{itemize}
\end{itemize}
\noindent Determinacy is inherited from $\calA^M_\gamma$ and $\calA^N_{\gamma,j}$.

\subsection{$\letin{x = z y^\beta}{M} : \theta $}
As $x$ must be of type $\beta$ for this to be in $\rmlpstr$, this is essentially the same as the previous case.

\subsection{$\letin{x = z ( \lambda y . M)}{N} : \theta $}
\begin{figure}
	\[\begin{tikzpicture}[arena]		
		\node[clear](i){$(\gamma,i_z)$};
		\node[tri, shape border uses incircle,shape border rotate=45,scale= 0.7](gamma)at (i.south){$\sem{\Gamma}$};
		\node[clear](qz)[below = 3em of i]{$q_z$} edge(i);
		\node[clear](q0z) [below left =of qz] {$q_0'$} edge(qz);
		\node[clear](a0z) [below =of q0z] {$a_0'$} edge(q0z);
		\node[clear, below= 2em of qz](az){$a_z$} edge (qz);
		\node[tri,  shape border rotate=90](theta3) at (a0z.south){$\sem{\theta_1}$};
		\node[clear, right = 6em of qz](a0){$a_0$} edge (i);
		\node[clear, below = of a0](q1){$q_1$} edge (a0);
		\node[clear, below = of q1](a1){$a_1$} edge (q1);
		\node[clear, below = of a1](dots) {$\vdots$} edge (a1);
		\node[clear, below = of dots](qn){$q_n$} edge (dots);
		\node[clear, below = of qn](an){$a_n$} edge (qn);
		\end{tikzpicture}\]
\caption{Prearena for $\sem{\seq{\Gamma, z: (\beta \rightarrow \theta_1) \rightarrow \beta}{\theta}}$}\label{fig:prearena-pstr-abstractiononLHS}
\end{figure}
Here we have $\seq{\Gamma, y: \beta, z: (\beta \rightarrow \theta_1) \rightarrow \beta}{M: \theta_1}$ and $\seq{\Gamma, x: \beta, z: (\beta \rightarrow \theta_1) \rightarrow \beta}{N : \theta}$.  As in the previous cases, plays in $\sem{\letin{x = z ( \lambda y . M)}{N}}$ consist of P playing $\sem{z ( \lambda y . M)}$ until $x$ has been evaluated, and then playing as $N$ with this value of $x$.  The prearena for this case is shown in figure \ref{fig:prearena-pstr-abstractiononLHS}.

Plays in $\sem{\letin{x = z ( \lambda y . M)}{N}}$ start with P playing $q_z$.  O can then either play $q_0'$, starting an $\sem{\lambda y. M}$-thread, or play $a_z$, giving a value for $x$ in the rest of the play.  If O chooses the former, that thread is played as $\sem{\lambda y. M}$, with $q_0'$-moves providing a new value for $y$, until O plays an $a_z$ move.   Once O does play an $a_z$ move, P plays as $\sem{N}$ with the answer O provided as the value for $x$.

In this construction a similar construction to that in \ref{proof:rmlpstr-lambda} will be used, to allow O to interleave plays of $\sem{M}$.  At any point when O would be able to change threads, it is also able to finish evaluating $M$ and give a value for $x$.  Once this happens play continues in the corresponding $\calA^N_{\gamma,j_x}$.
The formal construction for $\calA^{\letin{x = z ( \lambda y . M)}{N}}_{\gamma, i_z}$ is as follows:
\begin{itemize}
\item The set of states consists of:
\begin{itemize}
	\item Fresh states $(1)$, $(2)$, and $(3)$
	\item A copy of the states of each $\calA^M_{\gamma,i_y}$
	\item A copy of the states of each $\calA^N_{\gamma,j_x}$
\end{itemize}
\item The initial state is $(1)$
\item The final states are those which are final in each $\calA^{N}_{\gamma,j_x}$, and $(1)$
\item The transitions are:
\begin{itemize}
	\item $(1) \xrightarrow{\gamma, (0,\bot)} (2)$
	\item $(2) \xrightarrow{q_z, (1,\twovec{(2)}{\bot})} (3)$
	\item For each available $a_z$ move labelled $j_x$ we have the transition $(3) \xrightarrow{j_x, \twovec{(2)}{(3)}} s_{N,j}$ where $s_{N,j}$ is the secondary state of $\calA^N_{\gamma,j_x}$
	\item For each available $q_0'$ move labelled $i_y$, we have $(3) \xrightarrow{i_y, (2,\threevec{(2)}{(3)}{\bot})} s_M$ where $s_M$ is the secondary state of $\calA^{M}_{\gamma,i_y}$
	\item If $s_1 \xrightarrow{m, (j,\bar{s})} s_2, \bar{t}$ is a (non-initial) transition in one of the $\calA^M_{i_y}$, then:
	\begin{itemize}
		\item if $m$ is a $q_i$ or $a_i$ move in $\sem{\theta1}$, $s_1 \xrightarrow{m, (j+2,\threevec{(2)}{(3)}{\bar{s}})} s_2, \threevec{(2)}{(3)}{\bar{t}}$ is a transition.
		\item if $m$ is a move in $\sem{\Gamma}$,  $s_1 \xrightarrow{m, (j,{\bar{s}[(2) / s_0]})} s_2, {\bar{t}}[(2) / t_0]$ is a transition.
	\end{itemize}
	\item If $q_1 \xrightarrow{m,(k, \bar{s}) q_2, \bar{t}}$ is a transition already defined by one of these, and $q_1$ is either $(3)$ or a (non-initial) final state in one of the $\calA^M_{\gamma, i_y}$, and $q_3$ is a (non-initial) final state in one of the $\calA^M_{\gamma, i_y}$ then we have the transition $q_3 \xrightarrow{m,(k, \bar{s}) q_2, \bar{t}}$.
	
	\item For each transition $s_{N,j} \xrightarrow{m, (k, \bar{s})} q, \bar{t}$ in each $\calA^N_{\gamma,j_x}$, where $s_{N,j}$ is the secondary state of $\calA^N_{\gamma,j_x}$, we have the transition $s_{N,j} \xrightarrow{m, (k, \bar{s}[(2) / s_0])} q, \bar{t}$
	\item All other transitions in each $\calA^N_{\gamma,j_x}$ are left unchanged
	\item Transitions from final states as required by the inductive hypothesis are added.  This does not affect the language recognised, since the added transitions will require a level-0 data value to be ``in" the relevant copy of $\calA^N_{\gamma,j_x}$, and there can only be one level-0 data value in runs of this automaton.
\end{itemize}
\end{itemize}

\noindent Determinism is inherited from the constituent automata.	

\subsection{$\letin{x = z \makevar{\lambda u^\unitt . M_1}{\lambda v^\intt . M_2}}{N} : \theta $}
This is very similar to the previous case: the difference is that the $q_0'$ moves from the last case can now be either $read$ or $write(j)$, leading to playing as either $\sem{M_1}$ or $\sem{M_2}$ respectively.  The formal construction is almost identical to that given above.

\section{Proof of Theorem \ref{thm:forml-proof}}\label{appendix:rforml}

Given a $\rforml$ term-in-context $\seq{\Gamma}{M}$ we construct a Deterministic Weak NDCMA $\calA_{\seq{\Gamma}{M}}$ recognising, as a language, $\comp{\sem{\seq{\Gamma}{M}}}$.  By the full abstraction theorem, observational equivalence can then be checked by testing the corresponding automata for equivalence.

The shape of the pre-arena for terms $\sem{\seq{\Gamma}{M}}$ in $\rforml$ is shown in figure \ref{fig:1frorml-prearenas}.  The moves in section $A$ of the prearena correspond to $M$, while moves in sections $B$ and $C$ correspond to $\Gamma$.

A play $p$ in $\sem{\seq{\Gamma}{\alpha(n)}}$ is represented in the data language as a word $w$ where the string projection of $w$ is equal to the underlying sequence of moves in $p$.  Pointers are only ambiguous for question moves (as for answers well-bracketing is enough to ensure justification is clear).  Pointers for questions are represented in the following manner:  
\begin{itemize}
\item Initial questions (of which there is precisely one, at the beginning of the play) take a fresh level-$0$ data value.  
\item If $a$ is an answer-move in the play, then the corresponding letter in the word will be $\twovec{a}{d}$ where $d$ is the same data value as the answer's justifier.
\item Question moves in section A of the arena above take a fresh data value, $d$, such that $pred(d) = d'$ where $d'$ is the data value of the justifier.  These data values will be enough to determine the justifiers.
\item All other moves (i.e. those in sections B and C) take the data value of the most recent move in A (or the initial move, if no move in A has yet been made).  Moves in B will have their pointers represented using the ``tagging" of source- and target-moves, as used in \cite{HMO11} for $\rmlostr$.  We will not encode pointers of such moves justified by the initial move (i.e. $q^{(1)}$ moves), as they are unambiguously justified. 
\end{itemize}

\textbf{Reduction from $\rforml$}
The reduction is inductive on the construction of the canonical form.  We make the construction indexed by initial moves, with each automaton $\calA_i$ recognising the appropriate language restricted to the initial move $i$.  The construction to combine these into one automaton as per the specification above is a  straightforward union of the automata and merging of the initial states.

Our inductive hypothesis is slightly stronger than that the constructed automata recognises the appropriate languages.  We also require the following conditions on the automaton $\calA^M_i$:
\begin{itemize}
\item Initial states are never revisited (or have data values assigned to them)
\item The automaton is deterministic
\item Each state can only ever ``hold" data values of one, fixed, level.
\item There is precisely one transition from the initial state, labelled $i, (0,\bot)$.  We will call the target state of this transition the ``secondary state" of the automaton.  Further, this is the only transition in the automaton with signature $(0,\bot)$.
\item If $q$ and $q'$ are (non-initial) final states in the automaton, then if there is a transition $(q,a,\xi,p,\xi')$ then $(q',a,\xi,p,\xi')$ is also a transition.
\end{itemize} 

%
%
%

For the cases $(): \unitt$, $i : \intt$, $x^\beta: \beta$, $\suc{x^\intt} : \intt$, and $\pre{x^\intt} : \intt$, the constructions are exactly as in $\rmlpstr$.  We deal with the remaining cases here:

\subsection{$x^{\intreft} := y^\intt : \unitt$}
Here we have $\seq{\Gamma}{x^{\intreft} := y^\intt}$, so $x : \intreft$ and $y: \intt$ are in $\Gamma$.  Thus the initial moves have a $y$-component, say $j$.  Thus the language recognised by $\calA_{(\bar{\gamma},j)}$ is just $\set{ \twovec{(\bar{\gamma},j)}{d} \twovec{write_x (j)}{d} \twovec{ok_x}{d} \twovec{\bullet}{d} \: | \: d, \in \dataset \text{ and } d \text{ is level-0}}$.  This is recognised by the following automaton:

\begin{tikzpicture}[->,>=stealth',shorten >=1pt,auto,node distance=3cm,
                    semithick,]
                   
  \node[state,initial,accepting] (1) {$s_1$};  
  \node[state] (2) [right of=1] {$s_2$};
  \node[state] (3) [right of=2] {$s_3$};
  \node[state] (4) [right of=3] {$s_4$};
  \node[accepting,state] (5) [right of=4] {$s_5$};
  \path (1) edge              node {\small $(\bar{\gamma},j) , (0,\bot)$} 		(2)
   (2) edge              node {\small $w_x(j), (0, s_2)$} 		(3)
   (3) edge              node {\small $ok_x, (0,s_3)$} 		(4)
   (4) edge              node {\small $\bullet, (0,s_4)$} 		(5);
\end{tikzpicture}

\subsection{$\deref{x^\intreft} : \intt$}
This is similar to the previous case, only the value to return is given by O's play in the $x$-section of $\Gamma$.  The language recognised by $\calA_\gamma$ is just: $\set{ \twovec{\gamma}{d} \twovec{read_x}{d} \twovec{j_x}{d} \twovec{j}{d} \: | \: d \in \dataset \text{ and } d \text{ is level-0}}$.  The automaton is thus similar to that given above, except that from state $s_3$ the automaton splits into different states for each possible answer $j_x$.

\subsection{$\cond{x^\beta}{M}{N} : \theta$}
The initial move contains an $x$-component.  If this $x$-component is $0$ then the automaton is as the as the automaton for $N$, otherwise it is as the automaton for $M$.

\subsection{$\makevar{\lambda x^\unitt . M}{\lambda y^\intt . N} : \intreft$}\label{proof:1rforml-mkvar}
Here we have $\seq{\Gamma, x : \unitt}{M : \intt}$ and $\seq{\Gamma, y : \intt}{N: \unitt}$, and this ``bad-variable" construction uses these methods as read- and write-methods respectively.  The string projection of the language for $\sem{\makevar{\lambda x^\unitt . M}{\lambda y^\intt . N}}$ is then
$$ \gamma \cdot \bullet \cdot ( read \cdot L_M + \sum_j write(j) \cdot L^j_N )^*$$
Where $L_M$ is the language for $\sem{M}$ without the initial move, and $L^j_N$ is the language $\sem{N}$ when $y = j$, without the initial move. Note that the representing automata are level-0.

For an initial move $\bar{\gamma}$, we make the following construction of $\calA^{\makevar{\lambda x. M}{\lambda y . N}}_{\bar{\gamma}}$:

\begin{itemize}
\item The set of states is the disjoint union of the states of $\calA^{M}_{\bar{\gamma}}$, and each $\calA^N_{(\bar{\gamma},j)}$, minus the initial states, plus additional states $(1)$, $(2)$, and $(3)$.
\item The initial state is the state $(1)$.
\item The final states are those which are final in the constituent automata $\calA^{M}_{\bar{\gamma}}$ and each $\calA^N_{(\bar{\gamma},j)}$, and $(1)$ and $(3)$.
\item The transition relation is given as follows:
\begin{itemize}
	\item $(1) \xrightarrow{\bar{\gamma}, (0, \bot)} (2)$
	\item $(2) \xrightarrow{\bullet, (0, (2))} (3)$
	\item $(3) \xrightarrow{read, (1, \twovec{(3)}{\bot})} s_M$ where $s_M$ is the secondary state of $\calA^M_{\bar{\gamma}}$
	\item $(3) \xrightarrow{write(j), (1, \twovec{(3)}{\bot})} s_{N,j}$ where $s_{N,j}$ is the secondary state of $\calA^N_{(\bar{\gamma},j)}$
	\item For all transitions $s_1 \xrightarrow{m, (0, \xi)} s_2$ in a constituent automaton (not including initial transition), we have the transition $s_1 \xrightarrow{m, (1, \twovec{(3)}{\xi})} s_2$.
	\item From each final state, $s$, in one of the constituent automata, we add transitions $s \xrightarrow{read, (1, \twovec{(3)}{\bot})} s_M$ and $s \xrightarrow{write(j), (1, \twovec{(3)}{\bot})} s_{N,j}$ (where $s_M$ and $s_{N,j}$ are as before)
\end{itemize}
\end{itemize}

\noindent We note that determinism is inherited from the constituent automata.  Further, the only transitions from final states we need to add for the inductive hypothesis have already been added.

\subsection{$\while{M}{N} : \unitt$}\label{proof:1rforml-while}
The strategy $\sem{\while{M}{N}}$ plays as if playing $M$ until the final move would be made.  If this would be 0, P gives the $\bullet$ answer to the initial move, and stops.  Otherwise it plays as if playing $N$, until the final move would be made, when it starts as if playing $M$ again.  Note that $\calA^M_\gamma$ and $\calA^N_\gamma$ are both level-0. The automata $\calA^{\while{M}{N}}_\gamma$ is thus given by:
\begin{itemize}
\item The set of states is given by the disjoint union of the set of states of $\calA^M_\gamma$ and $\calA^N_\gamma$, without the initial states, plus new states $(1)$ and $(2)$.
\item The initial state is $(1)$.
\item The final states are $(1)$ and $(2)$.
\item The transitions are given as follows:
\begin{itemize}
	\item $(1) \xrightarrow{\bar{\gamma}, (0, \bot)} s_M$ where $s_M$ is the secondary state of $\calA^M_\gamma$.
	\item if $s'$ is a final state of $\calA^M_\gamma$ and $s \xrightarrow{m, (0,\xi)} s'$ is a transition in $\calA^M_\gamma$, with $m \neq 0$, we have the transition $s \xrightarrow{\epsilon} s_N$ where $s_N$ is the secondary state of $\calA^N_\gamma$. (We can compress the silent transition $\epsilon$ out, since by determinism of the strategy this is the only transition from $s$ in $\calA^M_\gamma$.)
	\item if $s'$ is a final state of $\calA^M_\gamma$ and $s \xrightarrow{m, (0,\xi)} s'$ is a transition in $\calA^M_\gamma$, with $m = 0$, we have the transition $s \xrightarrow{\bullet, (0,\xi)} (2)$
	\item if $s'$ is a final state of $\calA^N_\gamma$ and $s \xrightarrow{m, (0,\xi)} s'$ is a transition in $\calA^N_\gamma$, we have the transition $s \xrightarrow{\epsilon} s_M$ where $s_M$ is the secondary state of $\calA^M_\gamma$. (We can compress the silent transition $\epsilon$ out, since by determinism of the strategy this is the only transition from $s$ in $\calA^N_\gamma$.)
\end{itemize}
\end{itemize}

\noindent Determinacy is inherited from the constituent automata, and there are no transitions from final states that need be added. 

\subsection{$\letin{x = \newc}{M} : \theta$}
This is similar to the construction for $\rmlpstr$ in appendix \ref{appendix:pstrict}, but this time the value of the variable will be stored just by the level-0 data value.  This will correctly capture the scope of the variable.

We assume we have a family of automata, $\calA^M_i$, recognising the strategy $\sem{\seq{\Gamma, x: \intreft}{M: \theta}}$. 
$\sem{\seq{\Gamma}{\letin{x = \newc}{M} : \theta}}$ is constructed by restricting behaviour of $x$ to ``good variable" behaviour (i.e. after a read-move the response is an immediate reply of the last integer written to the variable), and then hiding those moves.  The automata construction is done in these two stages.

\textbf{Restriction to good-variable behaviour.} \quad 
Assume the finitary fragment we are using is $\set{0,1, \dots, k}$.  By our inductive hypothesis, we know that each state can only 'hold' data values of one level: let $Q_0$ be the set of states of $\calA^M_\gamma$ which hold level-0 data values, let $Q_{\geqslant 1}$ be the set of states of $\calA^M_\gamma$ which hold data values of level $\geqslant 1$, so the states of $\calA^M_\gamma$ are partitioned into $Q_0$, $Q_{\geqslant 1}$, and the initial state $q_I$.  We construct $\calC_\gamma$ as follows:
\begin{itemize}
\item The states of the automaton are $\set{q_I} \uplus Q_{\geqslant 1} \uplus (Q_0 \times \set{0,1, \dots, k})$
\item The final states are those which are final in $\calA^M_\gamma$, and those which are final in $\calA^M_\gamma$ paired with any integer.
\item The initial state is $q_I$, the initial state in $\calA^M_\gamma$.
\item The transitions are given as follows:
\begin{itemize}
	\item $q_I \xrightarrow{q_0, (0,\bot)} (s_M,0)$ where $s_M$ is the secondary state of $\calA^M_\gamma$
	\item If $s_1 \xrightarrow{m, (0, s_3)} s_2, t$ is in $\calA^M_\gamma$, where $m$ is not an $x$-$write$ move or a response to an $x$-$read$ move, then $s_1, s_2 \in Q_0$, and we have $(s_1, i) \xrightarrow{m, (0, (s_3, i))} (s_2, i), (t,i)$ for each $i$
	\item If $s_1 \xrightarrow{m, (k, \twovec{s_3}{\bar{\xi}})} s_2, \twovec{t}{\bar{t}}$ is in $\calA^M_\gamma$ (where $k \geq 1$), where $m$ is not an $x$-$write$ move or a response to an $x$-$read$ move, then $s_1, s_2 \in Q_{\geqslant 1}$, and we have $s_1 \xrightarrow{m, (k, \twovec{(s_3, i)}{\bar{\xi}})} s_2, \twovec{(t,i)}{\bar{t}}$ for each $i$
	
	\item For each $j$, if $s_1 \xrightarrow{write_x(j), (0, s_3)} s_2, t$ is in $\calA^M_\gamma$, then	$s_1, s_2 \in Q_0$, and we have $(s_1, i) \xrightarrow{write_x(j), (0, (s_3, i))} (s_2, j), (t,j)$ for each $i$
	\item For each $j$, if $s_1 \xrightarrow{write_x(j), (k, \twovec{s_3}{\bar{\xi}})} s_2, \twovec{t}{\bar{t}}$ is in $\calA^M_\gamma$ (where $k \geq 1$), then $s_1, s_2 \in Q_{\geqslant 1}$, and we have $s_1 \xrightarrow{write_x(j), (k, \twovec{(s_3, i)}{\bar{\xi}})} s_2, \twovec{(t,j)}{\bar{t}}$ for each $i$
	
	\item For each response to an $x$-read move, $j_x$, if $s_1 \xrightarrow{j_x, (0, s_3)} s_2, t$ is in $\calA^M_\gamma$, then $s_1, s_2 \in Q_0$, and we have $(s_1, j) \xrightarrow{j_x, (0, (s_3, j))} (s_2, j), (t,j)$
	\item For each response to an $x$-read move, $j_x$, if $s_1 \xrightarrow{j_x, (k, \twovec{s_3}{\bar{\xi}})} s_2, \twovec{t}{\bar{t}}$ is in $\calA^M_\gamma$ (where $k \geq 1$), then $s_1,s_2 \in Q_{\geqslant 1}$, and we have $s_1 \xrightarrow{j_x, (k, \twovec{(s_3, j)}{\bar{\xi}})} s_2, \twovec{(t,j)}{\bar{t}}$
\end{itemize}
\end{itemize}

\textbf{Hiding} \quad 
$\calA^{\letin{x = \newc}{M} }_\gamma$ is constructed from $\calC_\gamma$ as follows:

If we are in a configuration $(s_1, f)$ of $\calC_i$ where we can perform a transition $s_1 \xrightarrow{m_x, (j,\bar{s})} s_2, \bar{t}$ where $m_x$ is an $x$-move then by determinacy of strategies combined with the restriction to good variable behaviour, it is the only possible transition from this configuration.  Further, we note that using only $x$-transitions cannot lead to a change in data-value being read. Thus for every state $s_0$ of $\calC_\gamma$ and every possible ``signature" $\bar{\xi}_0$, there is a unique maximal (and not necessarily finite) sequence of transitions:
$$ s_0 \xrightarrow{m_0, (k, \bar{\xi}_0)} s_1, \bar{\xi}_1 \xrightarrow{m_1, (k, \bar{\xi}_1)} s_2, \bar{\xi}_2 \xrightarrow{m_2, (k, \bar{\xi}_2)} \dots $$
where each $m_i$ is an $x$-move.

From each $\calC_\gamma$ we construct the automaton $\calA^{\letin{x = \newc}{M}}_\gamma$ by considering where this sequence terminates for each state.  Everything is the same as in $\calC_\gamma$ except for the transition relation, which is altered as follows:
\begin{itemize}
\item If the maximal sequence of $x$-moves with signature $\bar{\xi}_0$ out of state $s_0$ is empty then all transitions requiring signature $\bar{\xi}_0$ out of $s_0$ are unchanged.
\item If the maximal sequence out of $s_0$ with signature $\bar{\xi}_0$ is finite and non-empty and ends in state $s_n$ and with signature $\bar{\xi}_n$, then for every transition $s_n \xrightarrow{m,(k, \bar{\xi}_n)} s_{n+1} , \bar{t}$ we add the transition $s_0 \xrightarrow{m,(k, \bar{\xi_0})} s_{n+1} , \bar{t}$.
\item All transitions on $x$-moves are removed
\item Transitions from final states as required by the IH are added.  This does not affect the language recognised, since the added transitions will require a level-0 data value to be ``in" the relevant location, and there can only be one level-0 data value in runs of this automaton.
\end{itemize}

Determinacy of the resulting automaton is inherited from determinism of $\calC_\gamma$ (and thence from $\calA^M_\gamma)$.    

\subsection{$\lambda x^\beta . M : \theta$}
We have $\seq{\Gamma, x: \beta}{M : \theta'}$, and therefore assume there is a family of automata $\calA^M_i$ recognising $\sem{M}$.  
The prearenas for $\sem{\seq{\Gamma, x}{M}}$ and $\sem{\seq{\Gamma}{\lambda x.M}}$ are shown in figure \ref{fig:prearenas-lambda}.  Note that the initial moves in $\sem{\seq{\Gamma, x}{M}}$ contain an $x$-component, so may be considered pairs $({\gamma}, i_x)$, while the initial moves in  $\sem{\seq{\Gamma}{\lambda x.M}}$ contain the same $\Gamma$-component, but no $x$-component.  The move $q_0$ therefore corresponds to the $\Gamma$-component, and the move $q_1$ precisely corresponds to an $x$-move.

$\sem{\seq{\Gamma}{\lambda x.M}}$ is as follows: after an initial move ${\gamma}$, P plays the unique $a_0$-move $\bullet$, and waits for a $q_1$-move.  Once O plays a $q_1$-move $i_x$, P plays as in $\sem{\seq{\Gamma,x}{M}}$ when given an initial move $({\gamma}, i_x)$.
However, as the $q_1$-moves are not initial, it is possible that O will play another $q_1$-move, $i_x'$.  Each time O does this it opens a new thread which P plays as per $\sem{\seq{\Gamma,x}{M}}$ when given initial move $({\gamma}, i_x')$.  Only O may switch between threads, and this can only happen immediately after P plays an $a_i$-move (for any $i$).  Thus we construct $\calA^{\lambda x.M}_{{\gamma}}$ as follows:
\begin{itemize}
\item The set of states is the disjoint union of the set of non-initial states of each $\calA^M_{(\bar{\gamma},i_x)}$, plus new states $(1)$, $(2)$, and $(3)$.
\item The initial state is $(1)$
\item The final states are those that are final in each $\calA^M_{(\bar{\gamma},i_x)}$, as well as $(1)$ and $(3)$.
\item The transition relation is as follows:
\begin{itemize}
	\item $(1) \xrightarrow{\gamma, (0,\bot)} (2)$
	\item $(2) \xrightarrow{a_0, (0,(2))} (3)$
	\item For each $i_x$, $(3) \xrightarrow{i_x, (1,\twovec{(3)}{\bot})} s_{i_x}$ where $s_{i_x}$ is the secondary state of $\calA^M_{\bar{\gamma},i_x}$
	\item If $s_1 \xrightarrow{m, (j,\bar{s})} s_2, \bar{t}$ is a (non-initial) transition in one of the $\calA^M_{i_x}$, then $s_1 \xrightarrow{m, (j+1,\twovec{(3)}{\bar{s}})} s_2, \twovec{(3)}{\bar{t}}$ is a transition.
	
\item If $s_1$ and $s_1'$ are both (non-initial) final states and $s_1 \xrightarrow{m, (j,\bar{s})} s_2, \bar{t}$ is a transition already given by the above rules, then $s_1' \xrightarrow{m, (j,\bar{s})} s_2, \bar{t}$.
\end{itemize}
\end{itemize}

\subsection{$\letin{x^\beta = M}{N} : \theta $}
This is very similar to the equivalent case in appendix \ref{appendix:pstrict}.

The strategy $\sem{\letin{x^\beta = M}{N}}$ is a concatenation of the strategies for $\sem{M}$ and $\sem{N}$, with the result of the $\sem{M}$ strategy determining the $x$-component of the initial move of $\sem{N}$.
We have $\seq{\Gamma}{M : \beta}$ and $\seq{\Gamma, x : \beta}{N : \theta}$.  The initial moves of $\sem{\seq{\Gamma, x : \beta}{N : \theta}}$ contain an $x$-component, so we index the family of automata recognising $\sem{\seq{\Gamma, x : \beta}{N : \theta}}$ as $\calA^N_{\gamma,j}$ where $j$ is the $x$-component.  The family of automata recognising $\sem{M}$ are indexed as $\calA^M_\gamma$. 
$\calA^{\letin{x^\beta = M}{N}}_i$ is constructed as follows:

If $\mathcal{L}(\calA^M_i) = \set{\twovec{\gamma}{d} \twovec{j}{d}}$ then $\calA^{\letin{x^\beta = M}{N}}_\gamma = \calA^N_{\gamma,j}$.  Otherwise, by determinacy of the strategy there cannot be a transition from the secondary state to a final state in $\calA^M_\gamma$, and $\calA^{\letin{x^\beta = M}{N}}_\gamma$ is given by:
\begin{itemize}
\item The set states of states is disjoint union of the non-initial states of $\calA^M_i$ and each $\calA^N_{i,j}$, plus new a state $(1)$.
\item The initial state is $(1)$.
\item The final states are those which are final in each $\calA^N_{\gamma,j}$.
\item The transitions are given as follows:
\begin{itemize}
	\item $(1) \xrightarrow{\gamma, (0, \bot)} s_M$ where $s_M$ is the secondary state of $\calA^M_\gamma$
	\item All transitions in $\calA^M_\gamma$ not going to a final state (or from the initial state) are preserved
	\item If $s_1 \xrightarrow{j,(0,s_3)} s_2, t$ is a transition in $\calA^M_\gamma$ with $s_2$ final (in $\calA^M_\gamma$) and $s_{N,j}$ is the secondary state of $\calA^N_{\gamma,j}$,  and $s_{N,j} \xrightarrow{m,(0,s_{N,j})} s_4, t'$ is in $\calA^N_{\gamma,j}$ then we have the transition $s_1 \xrightarrow{m,(0,s_3)} s_4, t'$.
	\item All other transition in each $\calA^N_{\gamma,j}$ are preserved unchanged.
	\item Transitions from final states as required by the inductive hypothesis are added.  This does not affect the language recognised, since the added transitions will require a level-0 data value to be ``in" the relevant copy of $\calA^N_{\gamma,j}$, and there can only be one level-0 data value in runs of this automaton.
\end{itemize}
\end{itemize}
\noindent Determinacy is inherited from $\calA^M_\gamma$ and $\calA^N_{\gamma,j}$.

\subsection{$\letin{x = z y^\beta}{M} : \theta $}
We assume $x$ is not of type $\beta$, as otherwise this could be handled by the previous construction.

We have $\seq{\Gamma, x : \theta', z: \beta \rightarrow \theta', y : \beta}{M: \theta}$.  Plays in $\sem{\letin{x = z y^\beta}{M}}$ begin with P copying the $y$-component of the initial move into the $z$-component, and O must respond with the unique answer, $\bullet_z$ (which corresponds to the initial move of $\sem{\theta'}$).  Play then continues as $\sem{M}$ except that all $x$-moves are relabelled as $z$-moves, hereditarily justified by the occurrence of $\bullet_z$ O was forced to play.  The pointers for moves justified by $\bullet_z$ will have to be made explicit as part of the construction.

$\calA^{\letin{x = z y^\beta}{M}}_{\gamma, i_y, i_z}$ is then constructed as follows:
\begin{itemize}
\item The states are two copies of the non-initial states of $\calA^M_{\gamma, i_y, i_z, \bullet_x}$ (where $\bullet_x$ is the move $\bullet_z$ that O will be forced to play, relabelled as an $x$-move) plus new states $(1)$, $(2)$ and $(3)$.  The second copy of $\calA^M_{\gamma, i_y, i_z, \bullet_x}$ will be used to encode P-pointers, so we write state $s$ in the second copy as $\ptarget{s}$.
\item The initial state is $(1)$.
\item The final states are those final in $\calA^M_{\gamma, i_y, i_z, \bullet_x}$, and $(1)$.
\item The transitions are as follows:
\begin{itemize}
	\item $(1) \xrightarrow{(\gamma,i_y,i_z), (0, \bot)} (2)$
	\item $(2) \xrightarrow{j_z, (0, (2))} (3)$ where $j_z$ is the initial move for $y$ copied into the $z$-component.
	\item $(3) \xrightarrow{\bullet_z, (0, (3))} s_M$ and $(3) \xrightarrow{\ptarget{\bullet_z}, (0, (3))} \ptarget{s_M}$ where $s_M$ is the secondary state of $\calA^M_{\gamma, i_y, i_z, \bullet_x}$.
	\item $s_1 \xrightarrow{m, (k, \bar{s})} s_2, \bar{t}$ is a transition in $\calA^M_{\gamma, i_y, i_z, \bullet_x}$ and $m$ is not an $x$-move, then we have the transitions $s_1 \xrightarrow{m, (k, \bar{s})} s_2, \bar{t}$ and $\ptarget{s_1} \xrightarrow{m, (k, \ptarget{\bar{s}})} \ptarget{s_2}, \ptarget{\bar{t}}$  (where $\ptarget{\bar{s}}$ replaces each element of $s$ of $\bar{s}$ with $\ptarget{s}$).
	\item $s_1 \xrightarrow{m_x, (k, \bar{s})} s_2, \bar{t}$ is a transition in $\calA^M_{\gamma, i_y, i_z, \bullet_x}$ and $m_x$ is a non-initial $x$-move, then we have the transitions $s_1 \xrightarrow{m_z, (k, \bar{s})} s_2, \bar{t}$ and $\ptarget{s_1} \xrightarrow{m_z, (k, \ptarget{\bar{s}})} \ptarget{s_2}, \ptarget{\bar{t}}$, where $m_z$ is the relabelling of $m_x$ into the $z$-component.
	\item $s_1 \xrightarrow{m_x, (k, \bar{s})} s_2, \bar{t}$ is a transition in $\calA^M_{\gamma, i_y, i_z, \bullet_x}$ and $m$ is the initial $x$-move, then we have the transitions $s_1 \xrightarrow{m_z, (k, \bar{s})} s_2, \bar{t}$ and $\ptarget{s_1} \xrightarrow{m_z, (k, \ptarget{\bar{s}})} \ptarget{s_2}, \ptarget{\bar{t}}$ and $\ptarget{s_1} \xrightarrow{\psource{m_z}, (k, \ptarget{\bar{s}})} \ptarget{s_2}, \ptarget{\bar{t}}$, where $m_z$ is the relabelling of $m_x$ into the $z$-component.
	\item Transitions from final states as required by the inductive hypothesis are added.  This does not affect the language recognised, since the added transitions will require a level-0 data value to be ``in" the relevant location, and there can only be one level-0 data value in runs of this automaton.
\end{itemize}
\end{itemize}

\noindent Determinism is inherited from the constituent automaton.

\subsection{$\letin{x = z ( \lambda y . M)}{N} : \theta $} 
Here we have $\seq{\Gamma, y: \beta, z: (\beta \rightarrow \beta) \rightarrow \theta_1}{M: \beta}$ and $\seq{\Gamma, x: \theta_1, z: (\beta \rightarrow \beta) \rightarrow \theta_1}{N : \theta}$.  The prearena is as follows:

	\[\begin{tikzpicture}[arena]		
		\node[clear](i){$(\gamma,i_z)$};
		\node[tri, shape border uses incircle,shape border rotate=45,scale= 0.7](gamma)at (i.south){$\sem{\Gamma}$};
		\node[clear](bullet)[below =of i]{$\bullet$} edge(i);
		\node[clear](jz) [below =of bullet] {$j_z$} edge(bullet);
		\node[clear](kz) [below =of jz] {$l_z$} edge(jz);
		\node[clear, below right = 3em of bullet](j){$\bullet_z$} edge (bullet);
		\node[tri,  shape border rotate=90,scale= 0.7](theta3) at (j.south){$\sem{\theta_1}$};
		\node[clear, right = 6em of bullet](a0){$a_0$} edge (i);
		\node[clear, below = of a0](q1){$q_1$} edge (a0);
		\node[clear, below = of q1](a1){$a_1$} edge (q1);
		\node[clear, below = of a1](dots) {$\vdots$} edge (a1);
		\node[clear, below = of dots](qn){$q_n$} edge (dots);
		\node[clear, below = of qn](an){$a_n$} edge (qn);
		\end{tikzpicture}\]
		
Plays in $\sem{\letin{x = z ( \lambda y . M)}{N}}$ start with P playing $\bullet$.  O can then either play $j_z$, starting an $\sem{M}$-thread, or play $\bullet_z$, the initial $x$-move.  If O chooses the former, that thread is played to completion, as in $\sem{M}$.  Once this is finished, (with P playing $k_z$ as the final move), we return to the situation where O can play either $j_z$ or $\bullet_z$.  Once O does play $\bullet_z$, P plays as $\sem{N}$, except that all $x$-moves are renamed to $z$-moves (justified by $\bullet_z$).  Further, whenever P plays in $x$ (which becomes a $z$-move), O can again play $j_z$ and start an $\sem{M}$ thread.

The automaton $\calA^{\letin{x = z ( \lambda y . M)}{N}}_{\gamma, i_z}$ is constructed as follows:
\begin{itemize}
\item The set of states consists of:
\begin{itemize}
	\item Fresh states $(1)$, $(2)$, and $(3)$
	\item Two copies of the set of non-initial states of $\calA^{N}_{\gamma,\bullet_x, i_z}$, the second marked as $\ptarget{s}$
	\item define $\mathcal{S}$, the set of states from which an $\sem{M}$-thread can be opened, as
	$$ \mathcal{S} = \set{(3)} \uplus \set{ r \: : \: ( r = s \text{ or } r = \ptarget{s} ) \text{ and } t \xrightarrow{m_x} s \text{ in }  \calA^{N}_{\gamma,\bullet_x, i_z} \text{ with } m_x \text{ a P-} x \text{ move}}$$
	We then take states $(s,t)$ where $s$ is a state in some $\calA^M_{\gamma, i_y, i_z}$ and $t \in \mathcal{S}$
\end{itemize}
\item The initial state is $(1)$
\item The final states are those which are final in $\calA^{N}_{\gamma,\bullet_x, i_z}$ (both tagged and untagged), and $(1)$
\item The transitions are:
\begin{itemize}
	\item $(1) \xrightarrow{(\gamma, i_x), (0,\bot)} (2)$
	\item $(2) \xrightarrow{\bullet, (0,(2))} (3)$
	\item $(3) \xrightarrow{\bullet_z, (0,(3))} s_N$ and $(3) \xrightarrow{\ptarget{\bullet_z}, (0,(3))} \ptarget{s_N}$, where $s_N$ is the secondary state of $\calA^{N}_{\gamma,\bullet_x, i_z}$
	
	\item If $s_1 \xrightarrow{m, (k, \bar{s})} s_2, \bar{t}$ is a transition in $\calA^{N}_{\gamma,\bullet_x, i_z}$ and $m$ is not an $x$-move, then we have the transitions $s_1 \xrightarrow{m, (k, \bar{s})} s_2, \bar{t}$ and $\ptarget{s_1} \xrightarrow{m, (k, \ptarget{\bar{s}})} \ptarget{s_2}, \ptarget{\bar{t}}$ 
	\item If $s_1 \xrightarrow{m_x, (k, \bar{s})} s_2, \bar{t}$ is a transition in $\calA^{N}_{\gamma,\bullet_x, i_z}$ and $m_x$ is a non-initial $x$-move, then we have the transitions $s_1 \xrightarrow{m_z, (k, \bar{s})} s_2, \bar{t}$ and $\ptarget{s_1} \xrightarrow{m_z, (k, \ptarget{\bar{s}})} \ptarget{s_2}, \ptarget{\bar{t}}$, where $m_z$ is the relabelling of $m_x$ into the $z$-component.
	\item If $s_1 \xrightarrow{m_x, (k, \bar{s})} s_2, \bar{t}$ is a transition in $\calA^{N}_{\gamma,\bullet_x, i_z}$ and $m$ is the initial $x$-move, then we have the transitions $s_1 \xrightarrow{m_z, (k, \bar{s})} s_2, \bar{t}$ and $\ptarget{s_1} \xrightarrow{m_z, (k, \ptarget{\bar{s}})} \ptarget{s_2}, \ptarget{\bar{t}}$ and $\ptarget{s_1} \xrightarrow{\psource{m_z}, (k, \ptarget{\bar{s}})} \ptarget{s_2}, \ptarget{\bar{t}}$, where $m_z$ is the relabelling of $m_x$ into the $z$-component.
	
	\item If $s \in \mathcal{S}$ then for all transitions $t \xrightarrow{m_x, (k, \bar{t})} s,\bar{s}$ we have the transition $s \xrightarrow{j_z, (k, \bar{s})} (q_{M,j},s), \bar{s}_{q_{M,j}}$, where $q_{M,j}$ is the secondary state of $\calA^M_{\gamma, j_y, i_z}$, and $\bar{s}_{q_{M,j}}$ is the same as $\bar{s}$ but with the last element paired with $q_{M,j}$.  Further:
	\begin{itemize}
		\item If $p_1 \xrightarrow{m, (0,p_1')} p_2, p_2'$ is in $\calA^M_{\gamma, j_y, i_z}$ where $p_2$ is not final (in $\calA^M_{\gamma, j_y, i_z}$), we have $(p_1, s) \xrightarrow{m, (k, \bar{s}_{p_1'})} (p_1,s), \bar{s}_{p_2'}$ 
		\item If $p_1 \xrightarrow{l, (0,p_1')} p_2, p_2'$ is in $\calA^M_{\gamma, j_y, i_z}$ where $p_2$ is final (in $\calA^M_{\gamma, j_y, i_z}$), we have $(p_1, s) \xrightarrow{m, (k, \bar{s}_{p_1'})} s, \bar{s}$ 
	\end{itemize}
	\item Transitions from final states as required by the IH are added.  This does not affect the language recognised, since the added transitions will require a level-0 data value to be ``in" the sub-automaton, and there can only be one level-0 data value in runs of this automaton.
\end{itemize}
\end{itemize}

\noindent Determinism is inherited from the constituent automata.	

\subsection{$\letin{x = z \makevar{\lambda u^\unitt . M_1}{\lambda v^\intt . M_2}}{N} : \theta $}
This is very similar to the previous case: the difference is that the $j_z$ moves from the last case can now be either $read$ or $write(j)$, leading to playing as either $\sem{M_1}$ or $\sem{M_2}$ respectively.  The formal construction is almost identical to that given above.

\section{Proofs of Undecidability Results}\label{appendix:undecidability}
\subsection{Proof of Theorem \ref{thm:rhsundecidability}}
\subsubsection{Q-Stores}

Following previous game semantics based undecidability results, we will reduce the halting problem for a class of finite state machines equipped with a queue to observational equivalence of RML-terms.  The universality of such machines goes back to Post's work on simple rewriting systems~\cite{Pos43,Min67}.  In particular, we will utilise automata equipped with a Q-store~\cite{Mur03}.   Q-stores are a generalisation of a queue which do not always follow queue behaviour.  However, we will be able to detect whether the queue discipline has been followed correctly or not. 

\begin{definition}\rm

A \emph{Q-store}\index{Q-Store} stores characters from a finite alphabet $\Sigma$.  Its content is defined by a natural number $n$ and a function $f : \makeset{0,\ldots,n} \rightarrow \Sigma \times \makeset{+,-} \times \makeset{+,-}$.  The three fields of $f(i)$ will be referred to as $f(i).\mathit{SYMBOL}$, $f(i).\mathit{ACCESSED}$ and $f(i).\mathit{MARKED}$ respectively.  The first holds the character stored in this element of the Q-store and the other two are used for bookkeeping.

The empty Q-store is defined by $n = 0$ and $f(0) = (\dagger,+,-)$ where $\dagger$ is a dummy symbol set as accessed but unmarked.

There are two operations which can be performed on a Q-store.
\begin{itemize}
	\item ADD $x$ adds $x \in \Sigma$ to the store.  The new Q-store $f' : \makeset{0,\ldots,n+1} \rightarrow \Sigma \times \makeset{+,-} \times \makeset{+,-}$ is defined by $f \subseteq f'$, $f'(n + 1) = (x,-,-)$.
	
	\item FETCH is the only access method.  It can return any previously unaccessed element in the store $f(i).\mathit{SYMBOL}$ (i.e.\ $f(i).\mathit{ACCESSED} = -$) provided an index $j$ can be found such that $0 \leq j < i \leq n$, $f(j).\mathit{ACCESSED} = +$ and $f(j).\mathit{MARKED} = -$.  As well as returning the value stored in the $i$th element, the operation sets $f(i).\mathit{ACCESSED}$ and $f(j).\mathit{MARKED}$ to $+$.
	 
\end{itemize}

\end{definition}

We see that a FETCH operation can return any unaccessed element $i$ provided there is an earlier element $j$ which has already been accessed but has not yet been marked.  The choice of $(i,j)$ is made nondeterministically and different choices can affect the store in different ways.  It is possible that the Q-store might behave as a queue.  This will occur if during a FETCH the choice of $i$ will always be the first unaccessed element and $j$ to be $i-1$.  If this happens then the Q-store will have a characteristic pattern: no unaccessed element occurs between two accessed elements.  The only way to have a Q-store with this pattern is if its behaviour has been that of a queue.  In particular, if all elements of a Q-store have been accessed then its behaviour was that of a queue.

We can now consider finite state machines equipped with Q-stores.

\begin{definition}\rm

A \emph{Q-machine}\index{Q-Machine} is a tuple $\mathcal{A} = \anglebra{Q,\Sigma,q_0,F,\delta^{\mathit{ADD}},\delta^{\mathit{FETCH}}}$, where:
\begin{itemize}
	\item $Q = Q^{A} + Q^{F} + F$ is the finite set of states with $q_0 \in Q$ the initial state.
	\item $\delta^{\mathit{ADD}} : Q^{A} \rightarrow Q \times \Sigma$ defines transitions out of states in $Q^A$.  If the machine is in state $q_1$ and $\delta^{\mathit{ADD}}(q_1) = (q_2,a)$ then the machine transitions into state $q_2$ and performs ADD $a$ on the machine's Q-store.
	\item $\delta^{\mathit{FETCH}} : Q^{F} \times \Sigma \rightarrow Q$ defines the machine's action when in a state from $Q^{F}$.  When in state $q_1 \in Q^{F}$ the Q-machine will attempt to perform a FETCH.  If this is  successful and returns symbol $a$ then the machine transitions into state $\delta^{\mathit{FETCH}}(q_1,a)$.  
\end{itemize}

We say that a Q-machine \emph{halts} if there exists a run (starting in the initial state) which ends in a final state (a state in $F$) with a Q-store in which all elements have been accessed.

\end{definition}

Since Q-machines only halt when every element in the Q-store has been accessed (so when the Q-store has acted as a queue) as far as halting is concerned they are the same as finite state automata equipped with a queue.  Hence, from Post's work we can infer that they have an undecidable halting problem.

\subsubsection{Representing Q-machines}

We now consider how to represent the run of an arbitrary Q-machine at the type sequent $\seq{}{ (\unitt \rightarrow \unitt) \rightarrow  \unitt \rightarrow \unitt}$.  The relevant prearena is shown in Figure~\ref{fig:PrearenaQMachine}.
\begin{figure}

\begin{center}
\begin{tikzpicture}[arena]
\node(q0) {$q_0$};
\node(a0)[below=of q0] {$a_0$} edge (q0);
\node (q1)[below = of a0] {$q_1$} edge (a0);
\node (a1)[below = of q1] {$a_1$} edge (q1);
\node (q2)[below = of a1] {$q_2$} edge (a1);
\node (a2)[below = of q2] {$a_2$} edge (q2);
\node (qhat)[left = of a1] {$\hat{q}$} edge (q1);
\node (ahat)[below = of qhat]{$\hat{a}$} edge (qhat);

\end{tikzpicture}
\end{center}

	\caption{Prearena for $\seq{}{ (\unitt \rightarrow \unitt) \rightarrow  \unitt \rightarrow \unitt}$}
	\label{fig:PrearenaQMachine}
\end{figure}

For technical convenience we will assume that the initial state of the Q-store results from a dummy ADD action executed once at the very start of the run.

Our representation of the Q-machine will begin with 
\Pstr{ (q0){q_0} \ (a0-q0,50){a_0} }.

Each ADD operation (including the dummy operation initializing the store) will then be interpreted by the segment
\Pstr{ (q1){q_1} \ (qh-q1,50){\hat{q}} \ (q1a) {q_1} \ (a1-q1a,50) {a_1} }.

Each FETCH will be represented by segments
\Pstr{ (q2) {q_2} \ (qh) {\hat{q}} \ (q2a) {q_2} \ (a2a-q2a,35) {a_2} \ (ah-qh,35) {\hat{a}} \ (a2-q2,45) {a_2}} 
where the first $q_2$ is justified by the $a_1$ from the $i$th ADD, $\hat{q}$ is justified by the $q_1$ immediately before that $a_1$ and the second $q_2$ is justified by the $a_1$ in the $j$th ADD.  Here we are using the visibility condition to force the choice of $j$ to be a strictly earlier ADD-block than the choice of $i$.
\[
\Pstr{
(q0) {q_0} \ (a0-q0) {a_0} \ 
(d1) {\cdots} \ 
(q1ha-a0) {q_1} \ (qha-q1ha) {\hat{q}} \ (q1a-a0) {q_1} \ (a1a-q1a) {a_1} \ 
(d2) {\cdots} \ 
(q1hb-a0) {q_1} \ (qhb-q1hb) {\hat{q}} \ (q1b-a0) {q_1} \ (a1b-q1b) {a_1} \ 
(d3) {\cdots}  \
(q2b-a1b) {q_2} \ (qhb2-q1b) {\hat{q}} \ (q2a-a1a) {q_2} \ (a2a-q2a) {a_2} \ (ah2-qhb2) {\hat{a}} \ (a2b-q2b) {a_2}
}
\]

Once the Q-machine has reached a final state at the end of the computation, we must check that the Q-store has the correct shape.  This is performed in a finishing up state where we visit each ADD-block from last to first and check each of them has been accessed.

\[
\Pstr{
(q0) {q_0} \ (a0-q0) {a_0} \ 
(d1) {\cdots} \ 
(q1a-a0) {q_1} \ (qh-q1a) {\hat{q}} \ (q1b-a0) {q_1} \ (a1b-q1b) {a_1} \ 
(d2) {\cdots} \ 
(q2-a1b) {q_2} \ (a2-q2) {a_2} \ (ah-qh) {\hat{a}} \ (a1a-q1a) {a_1} 
}
\]

In order to construct a term which follows this strategy we first consider some terms which perform the various responses.  Our final term will keep track of which state the simulation is in and imitate one of these terms accordingly.
\begin{itemize}
	\item $\lambda f . \ldots$ will respond to the initial $q_0$ with $a_0$.
	\item $\lambda f . f(); \lambda x . \Omega$ responds to $q_1$ with $\hat{q}$.  Once this is (eventually) answered with $\hat{a}$ it responds with $a_1$.  This $a_1$ can never be used to justify anything or else P will not respond.
	\item $\lambda f . \lambda x . f()$ responds to $q_1$ with $a_1$.  If this $a_1$ is used to justify a $q_2$ then it responds with $\hat{q}$.  If this is answered with $\hat{a}$ then it responds with $a_2$.
	\item $\lambda f . \lambda x . ()$ responds to $q_1$ with $a_1$ and to $q_2$ with $a_2$.
\end{itemize}

In order to keep track of which stage of the computation we are in, we will use a number of global variables.

\begin{itemize}
	\item \textit{State} --- keeping track of which state the simulated Q-machine is in.
	\item \textit{First} --- a flag letting us know if the first dummy ADD-operation has occurred.
	\item \textit{AddState} --- keeping track of how far through an ADD-operation we are.
	\item \textit{FetchState} --- keeping track of how far through a FETCH-operation we are.
	\item \textit{FinishingState} --- keeping track of how far through a finishing up operation we are.

\end{itemize}

Additionally, we will create several local variables for each ADD.

\begin{itemize}
	\item \textit{Symbol}, \textit{Accessed} and \textit{Marked} --- representing the appropriate fields in the Q-store.
	\item \textit{Finalised} --- a flag keeping track of whether this ADD-operation has been visited during the finishing up stage.  This is needed to ensure that each ADD is visited exactly once during this phase.
\end{itemize}
The term is shown in Figure~\ref{fig:TheTermEncodingAQMachine}.  We use the syntax $[B_1, \ldots, B_n]$ as an abbreviation for $\cond{\bigwedge{B_i}}{()}{\Omega}$.  The local variables are associated with the $q_1 \cdot a_1$ part of each ADD-block.  This ensures they can be accessed during a FETCH or the finishing up stage when moves are hereditarily justified by them.  Note that we cannot enforce that during the finishing up stage, the $q_2$ is justified by the last unfinalised $a_1$.  However, we do ensure that each $a_1$ justifies at most one $q_2$ during this phase.  Since we can rely on the second part of the finishing up state ($\hat{a} \cdot a_1$) to hide (by visibility) the $a_1$ from the last (by bracketing) unfinalised ADD-block, we know that the only way to reach a complete play is if O does indeed finalise the ADD-blocks in order from last to first.

To establish undecidability we note that the represented Q-machine will halt if and only if the term is not observationally equivalent to $\lambda f . \Omega$.  Hence, observational equivalence is undecidable if the type contains a first-order (or higher) argument which is \emph{not} the final argument (i.e.\ any type of the form $\theta_n \rightarrow \ldots \rightarrow \theta_4 \rightarrow (\theta_3 \rightarrow \theta_2) \rightarrow \theta_1 \rightarrow \theta_0$ for any RML types $\theta_i$ and $n \geq 3$).  

\begin{figure}
	\begin{lstlisting}[basicstyle=\small,tabsize=2]
let
	State = ref $q_0$
	First = ref 1
	AddState = ref 0
	FetchState = ref 0
	FinishingState = ref 0
in
	$\lambda$ f .
		[!State $\in Q^{A}$];
		if !AddState = 0 then
			AddState := 1;
			f();
			[!State $\in F$, !FinishingState = 1];
			FinishingState := 0;
			$\lambda$ x . $\Omega$
		else if !AddState = 1 then
			let 
				Symbol = ref $\ddag$
				Accessed = ref (if !First then + else -)
				Marked = ref -
				Finalised = ref -
			in 
				AddState := 0;
				if !First then
					First := 0;	Symbol := $\dagger$;
				else
					(Symbol,State) := $\delta^{\mathit{ADD}}$(!State);
				$\lambda$ x .
					if !State $\in Q^{F}$ then
						if !FetchState = 0 then
							[!Accessed = -];
							Accessed := +; FetchState := 1;
							f();
							[!FetchState = 2];
							FetchState := 0; 
							State := $\delta^{\mathit{FETCH}}$(!State,!Symbol);
						else if !FetchState = 1 then
							[!Accessed = +, !Marked = -];
							FetchState := 2; Marked := +;
						else $\Omega$
					else if !State $\in F$ then
						[!FinishingState = 0, !Accessed = +, !Finalised = -];
						FinishingState := 1; Finalised := +;
					else $\Omega$
		else $\Omega$	
\end{lstlisting}
	\caption{The term encoding a Q-machine}
	\label{fig:TheTermEncodingAQMachine}
\end{figure}

\subsection{Proof of Theorem \ref{thm:recursive-undecidability}}
We again rely on finite state systems equipped with a queue.  However, rather than rely on Q-machines, this time we utilise a programming system called \emph{Queue}.

\begin{definition}\rm

A \emph{Queue program} has a single memory cell $z$ that can store a symbol from $\Sigma$ and a queue (which can contain symbols from $\Sigma$).  A program consists of a finite sequence of instructions of the form $1 : I_1, 2 : I_2, \ldots, m : I_m$, where each $I_i$ is one of the following:

\begin{itemize}
	\item {\normalfont{\texttt{enqueue}} $a$}: add the symbol $a \in \Sigma$ to the end of the queue and go to the next instruction.
	\item {\normalfont\texttt{dequeue}}: if the queue is empty then halt, otherwise remove the element at the front of the queue and store it in $z$ then go to the next instruction.
	\item {\normalfont\texttt{if $z = a$ goto $L$}} where $a \in \Sigma$ and $L \geq 0$ is a label.  If the value stored in $z$ is $a$ then go to the $L$th instruction, otherwise go to the next instruction.
	\item {\normalfont\texttt{halt}}.
\end{itemize}
\end{definition}
The halting problem for Queue programs is undecidable~\cite{JM78}.

We will simulate Queue programs using a recursive function of type $(\unitt \rightarrow \unitt) \rightarrow \unitt$.  We will model the queue using the call-stack.  Every \texttt{enqueue} will cause a recursive call which will allocate a variable $\mathit{cur}$ containing the value to be enqueued.  When an item is removed from the queue we will set $\mathit{cur}$ to $0$ which we assume is a special value not in $\Sigma$.  This means that we know that the head of the queue corresponds to the oldest recursive call whose $\mathit{cur}$ does not contain $0$.  

In addition to the local variable $\mathit{cur}$ we will also need global variables $\mathit{halt}$ (a flag letting us know we should stop the computation and collapse the call-stack), $\mathit{pc}$ (which instruction we are currently on), $z$ (the Queue program's memory cell) and two variables $G$ and $H$.  When we make our recursive call, the new value to be added to the queue will be (temporarily) stored in $G$.  Further, the argument to the call (a function of type $\unitt \rightarrow \unitt$) will be such that if it is run when $H=0$ then the value of $\mathit{cur}$ from the previous call will be written to $G$.  If, on the other hand, the argument is run when $H = 1$ it will cause the value at the front of the queue to be written to $G$ and the appropriate $\mathit{cur}$ to be set to $0$ (i.e.\ that element is removed from the queue).  

Our term encoding a queue program is then
\[\begin{array}{l}
	\letin{\mathit{halt},\mathit{pc},z,G,H=\refint{0},\refint{1},\refint{0},\refint{0},\refint{0}}{}\\
	\qquad \qquad \qquad(\mu F^{(\unitt \rightarrow \unitt) \rightarrow \unitt}. \lambda {\mathit{arg}}^{\unitt \rightarrow \unitt}. {\mathit{body}}) (\lambda c^{\unitt} . \Omega)
\end{array}\]
where $\mathit{body}$ has the form
\[\letin{\mathit{cur} = \refint{(\deref{G})}}{\while{\deref{\mathit{halt}} = 0}{\rmlterm{case}(\deref{\mathit{pc}})[1 \mapsto J_1, \ldots, m \mapsto J_m]}}.\]

Each $J_i$ depends on $I_i$ according to Table~\ref{tab:SimulationsForEachQueueProgramInstruction}.
\begin{table}
	\centering
		\begin{tabular}{|l|l|}
			\hline
\multicolumn{1}{|c|}{$I_i$}
 &\multicolumn{1}{|c|}{$J_i$}\\
			\hline
			\texttt{enqueue} $n$ &  
				$\begin{array}{l}\assg{pc}{i+1};\\
				\assg{G}{n};\\
				F(\lambda x . \cond{!H = 0}{L}{R})\\
				\\
				\textrm{where}\\
				L\equiv\assg{G}{\deref{\mathit{cur}}}\\
				R\equiv	\rmlterm{if}\,{(\assg{H}{0};\mathit{arg}();!G =0)}\,\rmlterm{then}\,{\assg{z}{\mathit{cur}};\assg{\mathit{cur}}{0}}\,\\\qquad\rmlterm{else}\,{\assg{H}{1};\mathit{arg}()}
				\end{array}$\\
				\hline
				
				\texttt{dequeue} &
				$\begin{array}{l}
				\cond{\deref{\mathit{cur}}=0}{\assg{\mathit{halt}}{1}}{}\\
				\qquad \cond{\assg{H}{0}; \mathit{arg}(); \deref{G}=0}{\assg{z}{\deref{\mathit{cur}}};\assg{\mathit{cur}}{0}}\\
				\qquad \qquad{\assg{H}{1};\mathit{arg}()};\\
				\qquad \assg{\mathit{pc}}{i + 1}\\
				
				\end{array}$\\
				\hline
				\texttt{halt} & $\assg{\mathit{halt}}{1}$\\
				\hline
				\texttt{if $z = n$ goto $L$} & $\cond{\deref{z} = n}{\assg{\mathit{pc}}{L}}{\assg{\mathit{pc}}{i+1}}$\\
				\hline
		\end{tabular}
	\caption{Simulations for each Queue program instruction}
	\label{tab:SimulationsForEachQueueProgramInstruction}
\end{table}

This term is equivalent to $\seq{}{()}$ if and only if the simulated Queue program halts.  Hence, observational equivalence of $\rmlostr$ with recursive functions of type $(\unitt \rightarrow \unitt) \rightarrow \unitt$ is undecidable.  

\end{document}